\let\oldnl\nl
\newcommand{\nonl}{\renewcommand{\nl}{\let\nl\oldnl}}
\definecolor{tealblue}{HTML}{008080}
\definecolor{MyBlue}{HTML}{000090}
\newcommand{\reviseagain}[1]{#1}
\newcommand{\old}[1]{}
\def\endedit{\color{black}}		
\definecolor{Cloud}{HTML}{F5F5F5}
\definecolor{MScolor}{HTML}{BBDDFF}
\newcommand{\routinenl}{%
\nonl
  \refstepcounter{AlgoLine}%
  \nlset{\arabic{AlgoLine}}
  }
\renewcommand{\@algocf@capt@boxed}{above}
\newcommand{\removelatexerror}{\let\@latex@error\@gobble}
  \let\NAT@parse\undefined
\newcommand{\acdef}[1]{\textit{\acl{#1}} \textup{(\acs{#1})}\acused{#1}}		
\newcommand{\acdefp}[1]{\emph{\aclp{#1}} \textup(\acsp{#1}\textup)\acused{#1}}	
\newcommand{\acndefp}[1]{\aclp{#1} \textup(\acsp{#1}\textup)\acused{#1}}	
\newcommand{\pushright}[1]{\ifmeasuring@#1\else\omit\hfill$\displaystyle#1$\fi\ignorespaces}
\newcommand{\pushleft}[1]{\ifmeasuring@#1\else\omit$\displaystyle#1$\hfill\fi\ignorespaces}
\newtheorem{theorem}{Theorem}
\newtheorem{lemma}{Lemma}
\newtheorem{remark}{Remark}
\newcommand{\obsolete}[1]{}
\newcommand{\nocolsep}{\arraycolsep=1.4pt\def\arraystretch{1}}
\newcommand{\refereq}[2]{\overset{\text{\tiny{#1}}}{#2}}\newcommand{\vast}{\bBigg@{4}}
\newcommand{\Vast}{\bBigg@{5}}
\newcommand{\Real}{\mathbb{R}}
\newcommand{\grad}{\nabla}
\newcommand{\inner}[2]{\langle{#1},{#2}\rangle}
\newcommand{\domain}[1]{\textup{dom}(#1)}
\newcommand{\expectationfunction}{\mathbb{E}}
\newcommand{\expectation}[1]{\expectationfunction[#1]}
\newcommand{\condexpectation}[2]{\expectationfunction[#2\cond{#1}]}
\newcommand{\probafunction}{\mathbb{P}}
\newcommand{\proba}[1]{\probafunction(#1)}
\newcommand{\filtrationfunction}{\mathcal{F}}
\newcommand{\filtrationk}[1]{\filtrationfunction_{#1}}
\newlength\bshft 
\def\fakebold#1{\ThisStyle{\ooalign{$\SavedStyle#1$\cr%
  \kern-\bshft$\SavedStyle#1$\cr%
  \kern\bshft$\SavedStyle#1$}}}
\newcommand{\zero}[1]{o(#1)}
\newcommand{\magnitude}[1]{O(#1)}
\newcommand{\cond}{|}
\newcommand{\indicatorfunction}[1]{I_{#1}}
\newcommand{\sigmaalgebra}{\sigma}
\newcommand{\sigmaalgebrax}[1]{\sigmaalgebra(#1)}
\newcommand{\distfunction}{\textup{dist}}
\newcommand{\disttfunction}[1]{\distfunction_{#1}}
\newcommand{\bigdistt}[3]{\disttfunction{#1}\big(#2,#3\big)}
\DeclareMathOperator{\diag}{diag}
\DeclareMathOperator{\minimize}{minimize}
\DeclareMathOperator{\subjectto}{subject\ to}
\DeclarePairedDelimiter{\cardinality}{|}{|}
\DeclarePairedDelimiter{\norm}{\|}{\|}
\newcommand{\normx}[2]{\norm{#1}_{#2}}
\newcommand{\Normx}[2]{\norm*{#1}_{#2}}
\newcommand{\transpose}{\top}
\newcommand{\defeq}{:=}
\newcommand{\identity}{\textup{Id}}
\DeclareMathOperator{\produced}{p}
\DeclareMathOperator{\consumed}{c}
\newcommand{\iter}{k} 
\newcommand{\altiter}{l} 
\newcommand{\dimension}{d}
\newcommand{\primaldimension}{m}
\newcommand{\primaldimensioni}[1]{\primaldimension_{#1}}
\newcommand{\Dimension}{P}
\newcommand{\dualdimension}{q}
\newcommand{\dimensioni}[1]{\primaldimension_{#1}}
\newcommand{\DimensionI}[1]{\Dimension_{#1}}
\renewcommand{\probafunction}{\textup{Prob}}
\newcommand{\Probas}{\Pi}
\newcommand{\ProbaI}[1]{\Probas_{#1}}
\newcommand{\probas}{\pi}
\newcommand{\probai}[1]{\probas_{#1}}
\newcommand{\probaij}[2]{\probas_{#1,#2}}
\newcommand{\stepsize}{\tau}
\newcommand{\Stepsize}{T}
\newcommand{\Stepsizek}[1]{\Stepsize}
\newcommand{\Stepsizeki}[2]{\Stepsizek{#1}_{#2}}
\newcommand{\StepsizekI}[2]{\Stepsizek{#1}_{#2}}
\newcommand{\StepsizeDSOk}[1]{\Stepsizek{#1}_{0}}
\newcommand{\StepsizeLAk}[2]{\Stepsizek{#2}_{#1}}
\newcommand{\StepsizeI}[1]{\Stepsize_{#1}}
\newcommand{\stepsizei}[1]{\stepsize_{#1}}
\newcommand{\dualstepsize}{\sigma}
\newcommand{\constraintcovariance}{\Sigma}
\newcommand{\costcovariance}{\Omega}
\newcommand{\Smoothness}{\Lambda}
\newcommand{\SmoothnessI}[1]{\Smoothness_{#1}}
\newcommand{\Scale}{\Delta}
\newcommand{\Scalek}[1]{\Scale}
\newcommand{\gammacoef}{\Gamma}
\newcommand{\gammacoefkl}[2]{\gammacoef_{#1}^{#2}}
  \newcommand{\stronglyconvexfunction}{\zeta}
  \newcommand{\stronglyconvex}[1]{\stronglyconvexfunction(#1)}
  \newcommand{\lyapunovfunction}{\mathcal{L}}%
\newcommand{\lyapunovk}[1]{\lyapunovfunction_{#1}}%
\newcommand{\Aggregator}{\mathcal{A}}
\newcommand{\aggregator}{a}
\newcommand{\altaggregator}{\aggregator^\prime}
\newcommand{\nbaggregators}{p}
\newcommand{\Block}{\mathcal{I}}
\newcommand{\block}{I}
\newcommand{\coordinate}{i}
\newcommand{\altcoordinate}{j}
\newcommand{\minusblock}{-\block}
\newcommand{\tim}{t}
\newcommand{\Tim}{\mathcal{T}}
\newcommand{\timlast}{T}
\newcommand{\Node}{\mathcal{N}}
\newcommand{\Nodea}[1]{\Node_{#1}}
\newcommand{\Nodep}{\Node_+}
\newcommand{\node}{n}
\newcommand{\nodenb}{N}
\newcommand{\nodem}{\node_{-}}
\newcommand{\altnode}{m}
\newcommand{\altnodem}{m_{-}}
\newcommand{\Aall}{A}
\newcommand{\ADSO}{\Aall_0}
\newcommand{\ALA}[1]{\Aall_{#1}}
\newcommand{\AI}[1]{\Aall_{#1}}
\newcommand{\bvector}{b}\newcommand{\bvectorDSO}{\bvector_0}
\newcommand{\bvectorLA}[1]{\bvector_{#1}}
\newcommand{\extcostfunction}{\Phi}
\newcommand{\extcostIfunction}[1]{\extcostfunction_{#1}}
\newcommand{\extcostI}[2]{\extcostIfunction{#1}(#2)}
\newcommand{\vextcostfunction}{\bm{\extcostfunction}}
\newcommand{\hatextcost}{\hat{\extcostfunction}}
\newcommand{\hatvextcost}{\hat{\bm{\extcostfunction}}}
\newcommand{\vextcost}[1]{\vextcostfunction(#1)}
\newcommand{\hatextcostk}[1]{\hatextcost_{#1}}
\newcommand{\hatvextcostk}[1]{\hatvextcost_{#1}}
\newcommand{\costfunction}{\phi}
\newcommand{\costDSOfunction}{\costfunction_0}
\newcommand{\costLAfunction}[1]{\costfunction_{#1}}
\newcommand{\costIfunction}[1]{\costfunction_{#1}}
\newcommand{\cost}[1]{\costfunction(#1)}
\newcommand{\extcost}[1]{\extcostfunction(#1)}
\newcommand{\costDSO}[1]{\costDSOfunction(#1)}
\newcommand{\costLA}[2]{\costLAfunction{#1}(#2)}
\newcommand{\costI}[2]{\costIfunction{#1}(#2)}
\newcommand{\linearizedlagrangiankfunction}[1]{\xi^{#1}}
\newcommand{\linearizedlagrangiankIfunction}[2]{\linearizedlagrangiankfunction{#1}_{#2}}
\newcommand{\linearizedlagrangiankI}[3]{\linearizedlagrangiankIfunction{#1}{#2}(#3)}
\newcommand{\riskfunction}{r}
\newcommand{\riskDSOfunction}{\riskfunction_0}
\newcommand{\riskLAfunction}[1]{\riskfunction_{#1}}
\newcommand{\riskIfunction}[1]{\riskfunction_{#1}}
\newcommand{\risk}[1]{\riskfunction(#1)}
\newcommand{\riskDSO}[1]{\riskDSOfunction(#1)}
\newcommand{\riskLA}[2]{\riskLAfunction{#1}(#2)}
\newcommand{\riskI}[2]{\riskIfunction{#1}(#2)}
\newcommand{\xall}{\bm{x}}
\newcommand{\hatxall}{\hat{\xall}}
\newcommand{\dxall}{\bm{t}}
\newcommand{\xallsol}{\xall^\ast}
\newcommand{\xDSO}{\xall_0}
\newcommand{\altxall}{\tilde{\xall}}
\newcommand{\altxDSO}{\altxall_0}
\newcommand{\xallk}[1]{\xall^{#1}}
\newcommand{\xDSOk}[1]{\xDSO^{#1}}
\newcommand{\hatxallk}[1]{\hatxall^{#1}}
\newcommand{\xLAs}{\xall_\Aggregator}
\newcommand{\xLA}[1]{\xall_{#1}}
\newcommand{\xLAk}[2]{\xLA{#1}^{#2}}
\newcommand{\xI}[1]{\xall_{#1}}
\newcommand{\xIk}[2]{\xI{#1}^{#2}}
\newcommand{\xIsol}[1]{\xallsol_{#1}}
\newcommand{\hatxI}[1]{\hatxall_{#1}}
\newcommand{\hatxIk}[2]{\hatxI{#1}^{#2}}
\newcommand{\altxLA}[1]{\altxall_{#1}}
\newcommand{\altxI}[1]{\altxall_{#1}}
\newcommand{\Xall}{\mathcal{X}}
\newcommand{\Xallsol}{\Xall^\ast}
\newcommand{\XDSO}{\mathcal{X}_0}
\newcommand{\XLA}[1]{\mathcal{X}_{#1}}
\newcommand{\Zall}{\bm{z}}
\newcommand{\Zallk}[1]{\Zall^{#1}}
\newcommand{\ZDSO}{\Zall_0}
\newcommand{\ZDSOk}[1]{\ZDSO^{#1}}
\newcommand{\ZLA}[1]{\Zall_{#1}}
\newcommand{\ZLAk}[2]{\ZLA{#1}^{#2}}
\newcommand{\Sall}{\bm{s}}
\newcommand{\Sallk}[1]{\Sall^{#1}}
\newcommand{\SDSO}{\Sall_0}
\newcommand{\SDSOk}[1]{\SDSO^{#1}}
\newcommand{\SLA}[1]{\Sall_{#1}}
\newcommand{\SLAk}[2]{\SLA{#1}^{#2}}
\newcommand{\thetak}[1]{\theta_{#1}}
\newcommand{\varthetak}[1]{\vartheta_{#1}}
\newcommand{\dualvariable}{y}
\newcommand{\altdualvariable}{u}
\newcommand{\LAaltdualvariable}{v}
\newcommand{\LAaltaltdualvariable}{w}
\newcommand{\dualk}[1]{\dualvariable^{#1}}
\newcommand{\altdualk}[1]{\altdualvariable^{#1}}
\newcommand{\LAaltdualk}[1]{\LAaltdualvariable^{#1}}
\newcommand{\LAaltaltdualk}[1]{\LAaltaltdualvariable^{#1}}
\newcommand{\LAaltdualLA}[1]{\altdualvariable_{#1}}
\newcommand{\dualvariablecn}[2]{\bm{\dualvariable}^{\textup{#1}}_{#2}}
\newcommand{\dlmqn}[1]{\dualvariablecn{q}{#1}}
\newcommand{\dlmpn}[1]{\dualvariablecn{p}{#1}}
\newcommand{\constraintfunction}{h}
\newcommand{\constraintmin}{\constraintfunction^\ast}
\newcommand{\constraint}[1]{\constraintfunction(#1)}
\newcommand{\gradI}[1]{\grad_{#1}}
\newcommand{\cur}{l}
\newcommand{\vcur}{\bm{\cur}}
\newcommand{\curn}[1]{\vcur_{#1}}
\newcommand{\curnt}[2]{\cur_{#1,#2}}
\newcommand{\vol}{v}
\newcommand{\vvol}{\bm{\vol}}
\newcommand{\voln}[1]{\vvol_{#1}}
\newcommand{\volnt}[2]{\vol_{#1,#2}}
\newcommand{\bvol}{V}
\newcommand{\ubvol}{\bar{\bvol}}
\newcommand{\ubvoln}[1]{\ubvol_{#1}}
\newcommand{\lbvol}{\underline{\bvol}}
\newcommand{\lbvoln}[1]{\lbvol_{#1}}
\newcommand{\apf}{f}
\newcommand{\vapf}{\bm{\apf}}
\newcommand{\apfn}[1]{\vapf_{#1}}
\newcommand{\apfnt}[2]{\apf_{#1,#2}}
\newcommand{\rpf}{g}
\newcommand{\vrpf}{\bm{\rpf}}
\newcommand{\rpfn}[1]{\vrpf_{#1}}
\newcommand{\rpfnt}[2]{\rpf_{#1,#2}}
\newcommand{\ubpf}{S}
\newcommand{\ubpfn}[1]{\ubpf_{#1}}
\newcommand{\napc}{p}
\newcommand{\vnapc}{\bm{\napc}}
\newcommand{\napcn}[1]{\vnapc_{#1}}
\newcommand{\napcnt}[2]{\napc_{#1,#2}}
\newcommand{\apc}{\napc^{\consumed}}
\newcommand{\apcnt}[2]{\apc_{#1,#2}}
\newcommand{\lbmapc}{E}
\newcommand{\lbmapcn}[1]{\lbmapc_{#1}}
\newcommand{\bapc}{P}
\newcommand{\ubapc}{\overline{\bapc}}
\newcommand{\ubapcnt}[2]{\ubapc_{#1,#2}}
\newcommand{\ubapcn}[1]{\bm{\ubapc}_{#1}}
\newcommand{\lbapc}{\underline{\bapc}}
\newcommand{\lbapcnt}[2]{\lbapc_{#1,#2}}
\newcommand{\lbapcn}[1]{\bm{\lbapc}_{#1}}
\newcommand{\ratiopc}{\tau^{\consumed}}
\newcommand{\ratiopcn}[1]{\ratiopc_{#1}}
\newcommand{\nrpc}{q}
\newcommand{\vnrpc}{\bm{\nrpc}}
\newcommand{\nrpcn}[1]{\vnrpc_{#1}}
\newcommand{\nrpcnt}[2]{\nrpc_{#1,#2}}
\newcommand{\rpc}{\nrpc^{\consumed}}
\newcommand{\rpcnt}[2]{\rpc_{#1,#2}}
\newcommand{\app}{\napc^{\produced}}
\newcommand{\appnt}[2]{\app_{#1,#2}}
\newcommand{\lbratioapp}{\underline{\rho}^{\produced}}
\newcommand{\ubratioapp}{\overline{\rho}^{\produced}}
\newcommand{\lbratioappnt}[2]{\lbratioapp_{#1,#2}}
\newcommand{\ubratioappnt}[2]{\ubratioapp_{#1,#2}}
\newcommand{\rpp}{\nrpc^{\produced}}
\newcommand{\rppnt}[2]{\rpp_{#1,#2}}
\newcommand{\res}{R}
\newcommand{\resn}[1]{\res_{#1}}
\newcommand{\rea}{X}
\newcommand{\rean}[1]{\rea_{#1}}
\newcommand{\con}{G}
\newcommand{\conn}[1]{\con_{#1}}
\newcommand{\sus}{B}
\newcommand{\susn}[1]{\sus_{#1}}
 \newcommand{\UnitI}[1]{U_{#1}}
 \renewcommand{\identity}{\bm{I}}
 \newcommand{\identityI}[1]{\identity_{#1}}
\newcommand{\mixedcriterionfunction}{R^{\textup{KKT}}}\newcommand{\mixedcriterion}[2]{\mixedcriterionfunction({#1,#2})}
\newcommand{\lagrangianfunction}{L}
\newcommand{\lagrangian}[2]{\lagrangianfunction(#1,#2)}
\renewcommand{\Xall}{\Real^\primaldimension}%
\renewcommand{\Xallsol}{\mathcal{X}^\ast}%
\renewcommand{\block}{\bm{i}}
\renewcommand{\identity}{I}
\renewcommand{\indicatorfunction}[1]{\bm{1}_{#1}}
\newcommand{\hhm}{\hspace{-2pt}}
\newacro{PDA}{primal-dual algorithm}
\newacro{SOCP}{Second-Order Conic Programming}
\newacro{DSO}{Distribution System Operator}
\newacro{LA}{Load Aggregator}
\newacro{OPF}{Optimal Power Flow}
\newacro{AC}{alternative current}
\newacro{DER}{Distributed Energy Resource}
\newacro{LMP}{Locational Marginal Price}
\newacro{DLMP}{Distribution Locational Marginal Price}
\newacro{ACOPF}{Alternative Current Optimal Power Flow}
\newacro{PPDLMP}{privacy-preserving DLMP solver}
\title{\LARGE \bf
A Privacy-preserving Decentralized Algorithm for Distribution Locational Marginal Prices 
}
\date{\today}
\author{Olivier Bilenne$^{1}$, Paulin Jacquot$^{2}$, Nadia Oudjane$^{2}$, Mathias Staudigl$^{1}$, Cheng Wan$^{2}$ and Barbara Franci$^{1}$
\thanks{*This research benefited from the support of the FMJH Program PGMO and from the support of EDF.}

\thanks{$^{1}$ Olivier Bilenne, Barbara Franci and Mathias Staudigl are with the Department of Advanced Computing Sciences, Maastricht University, The Netherlands,
{\tt\small o.bilenne;m.staudigl@maastrichtuniversity.nl}}
\thanks{$^{2}$ Paulin Jacquot, Nadia Oudjane and Cheng Wan are with EDF R\&D, OSIRIS, Palaiseau, France.
{\tt\small paulin.jacquot;nadia.oudjane;cheng.wan@edf.fr } }
}
\begin{document}
\maketitle
\thispagestyle{empty}
\pagestyle{empty}

\begin{abstract}

A major challenge in today’s electricity system is the management of flexibilities offered by new usages, such as smart home appliances or electric vehicles. By incentivizing energy consumption profiles of individuals, demand response seeks to adjust the power demand to the supply, for increased grid stability and better integration of renewable energies. This optimization of flexibility is typically managed by Load Aggregators, independent entities which aggregate and optimize numerous flexibility providers. The consideration of the underlying distribution network constraints,  which couple the different actors, leads to a complex multi-agent problem.  To address it, we propose a new decentralized algorithm that solves a convex relaxation of the classical Alternative Current Optimal Power Flow (ACOPF) problem, and which relies on local information only. Each computational step is performed in a privacy-preserving manner, and system-wide coordination is achieved via node-specific distribution locational marginal prices (DLMPs). We demonstrate the efficiency of our approach on a 15-bus radial distribution network.
\end{abstract}

\renewcommand{\sharp}{\gamma}
\acresetall
\allowdisplaybreaks

\newcommand{\eqd}{\overset{\text{def}}{=}} 
\section{Introduction}
\label{sec:intro}
The modern distribution network is undergoing an unprecedent reformation, thanks to the increased deployment of \acndefp{DER} in the form of distributed generators, distributed storage, microgrids, aggregators managing fleets of electric vehicles or groups of prosumers \cite{NAPS16}.
While the potential benefits of \acp{DER} are globally accepted, reaching those benefits requires smart management methods. Specifically, wrong control strategies could lead to drastic voltage fluctuations and supply-demand imbalances.
With this in mind, a replication of the transmission-level \acdef{LMP} (defined as the marginal cost induced by an additional unit of demand at a particular bus) is much desired.  The price signals differ spatially and temporally, and are used to incentivize \acp{DER} to balance supply-demand, support voltage, and minimize system losses. The necessary extension to \emph{Distribution} \acp{LMP}\textemdash abbreviated as \acsp{DLMP}\textemdash has been developed in \cite{SotVig06,Hey12}. A key question in the \ac{DLMP} approach is their effective computation. According to \cite{Con12,SinGos10}, the procedure of using \acp{DLMP} is as follows: the \acdef{DSO} obtains the flexible demand and supply data, such as active and reactive power generation at the buses, from the \acdefp{LA}.
Having complete information about the distribution network and the predicted spot prices at the relevant distribution buses, the \acp{DLMP} are calculated by solving a network optimization problem.
Specifically, \acp{DLMP} are obtained as dual variables, measuring the sensitivity of the network flow constraints describing the physics of the problem, and are then announced to aggregators.
Considering the received \acp{DLMP} and predicted spot prices, each aggregator make its optimal plans and submits its energy schedules to the spot market. Our distributed coordination mechanism follows this approach line-by-line. We develop a new distributed block-coordinate descent algorithm, designed to effectively compute \acp{DLMP} in a decentralized and privacy-preserving way. Our computational architecture involves direct communication between the \acp{LA} and the feeding bus, who acts as a central computational unit which updates the \acp{DLMP}. However, the data communicated to the center are not containing any information on the local cost function or power profiles managed by the \ac{LA}. In that sense, our notion of privacy should not be confused with the influential concept of \emph{differential privacy} in computer science. Our main theoretical result (\Cref{theorem:convergence}) states the convergence of a general \textit{primal-dual block-coordinate descent algorithm}, that extends recent block-coordinate primal-dual splitting methods for solving linearly constrained composite convex optimization as presented in \cite{malitsky17,luke18,malitsky2019primaldual}. 
\subsection{Related Literature}
The increased importance of effective management of \acp{DER} has been supported by an active literature on  decentralized control strategies. \cite{Li:2014aa} introduced \acp{DLMP} for the distributed management of fleets of electric vehicles. \cite{Hua2014} proposed a quadratic programming approach to solve \acp{DLMP} for decentralized congestion management. The Optimal Power Flow model (OPF) is the standard approach for power flow analysis and optimization of power systems. Using the convex relaxation derived in \cite{farivar13} (see also \cite{Kocuk:2016aa}), Papavasiliou \cite{papavasiliou18} derived \acp{DLMP} based on the KKT conditions, yet he does not provide any algorithm to effectively compute \acp{DLMP}.
\cite{Alsaleh:2018aa} proposes to compute \acp{DLMP} via semi-definite programming but their algorithm is not distributed. \cite{Bai:2018aa} computes \acp{DLMP} by focusing on the day-ahead distribution level electricity market. As in our paper, \acp{DLMP} are decomposed into a number of components (i.e., marginal costs for active power, reactive power, congestion, voltage support, and loss), which provide price signals to motivate \acp{DER} to contribute to congestion management and voltage support.
\cite{zhou2017incentive} proposes a practical distributed algorithm for optimizing DERs, but their approach differs from us in that it relies on a linearization of \ac{OPF}, and uses dual decomposition and gradient descent, which requires the exchange of local primal variables.



\section{Distributed Optimal Power Flow}
\label{sec:OPF}
Consider a power system with $\nodenb$ buses $\Node=\{1,\dots,\nodenb\}$ on a radial distribution network, modeled as a tree graph $\mathcal{G}=(\Nodep,\mathcal{E})$, where $\Nodep=\Node\cup\{0\}$. The root node $0$ is selected as the reference bus. The network is optimized over a time window $\Tim=\{1,\dots,\timlast\}$.

\subsection{Branch flow equations}
We use $\napcn{\node}=(\napcnt{\node}{1},\dots,\napcnt{\node}{\timlast})$ and $\nrpcn{\node}=(\nrpcnt{\node}{1},\dots,\nrpcnt{\node}{\timlast})$ to denote active and reactive power consumption at bus $n$ at each time point $t\in\Tim$. Thus, $\napcnt{\node}{\tim}<0$ means that there is production of energy at bus $n$ at time $t$. At $\node=0$, we assume that power will only be generated and there is no consumption, i.e. $\napcnt{0}{\tim}\leq 0$. 

In deriving the power flow equations, we follow \cite{peng18}. Specifically, after elimination of  phase angles and convex relaxation, the \acs{AC} branch flow equations for a node $\node\in\Node$ and its (unique) ancestor on the graph, denoted by~$\nodem$, are:
\begin{subequations}\label{branchflowconstraints}
\newcommand{\spl}{}%
\begin{align}
 \label{flowconservationactive}
\spl & \apfn{\node} \,{-}\, \!\!\sum\limits_{\altnode:\altnodem=\node}\!\! ( \apfn{\altnode} \,{-}\, \resn{\altnode}\curn{\altnode} ) \,{+}\, \napcn{\node} \,{+}\, \conn{\node}\voln{\node} \,{=}\, 0
\hspace{-20mm} &[\dlmpn{\node}]
  \\
  \label{flowconservationreactive} 
\spl & \rpfn{\node} \,{-}\, \!\!\sum\limits_{\altnode:\altnodem=\node}\!\! ( \rpfn{\altnode} \,{-}\, \resn{\altnode}\curn{\altnode} ) \,{+}\, \nrpcn{\node} \,{-}\, \susn{\node}\voln{\node} \,{=}\, 0 \hspace{-20mm}& [\dlmqn{\node}]
  \\
   \label{Ohm} 
\spl &\voln{\node} \,{-}\, 2 (\resn{\node}\apfn{\node} \,{+}\, \rean{\node}\rpfn{\node}) \,{+}\, (\resn{\node}^2 \,{+}\, \rean{\node}^2) \,\curn{\node} \,{=}\, \voln{\nodem} 
\hspace{-4mm}& 
  \\
 \label{powerflowrelaxed}
\spl &  \apfnt{\node}{\tim}^2 +\rpfnt{\node}{\tim}^2 \leq \volnt{\node}{\tim} \curnt{\node}{\tim}
   &\forall \tim\in\Tim
  \\
  \label{powerflowbound}
\spl &  \apfnt{\node}{\tim}^2 +\rpfnt{\node}{\tim}^2 \leq \ubpfn{\node}^2
  & \forall \tim\in\Tim 
  \\
  \label{powerflowboundm}
\spl & (\apfnt{\node}{\tim}-\resn{\node}\curnt{\node}{\tim} )^2 +(\rpfnt{\node}{\tim}-\rean{\node}\curnt{\node}{\tim} )^2 \leq \ubpfn{\node}^2 \hspace{-20mm}&  \forall\tim\in\Tim
  \\
 \label{voltagebound}
\spl & \lbvoln{\node}\leq \volnt{\node}{\tim} \leq \ubvoln{\node} ,
  & \forall  \tim\in\Tim
 \end{align}
\end{subequations}  
where
\begin{itemize}
\item 
$\voln{\node}=(\volnt{\node}{1},\dots,\volnt{\node}{\timlast})$ and~$\voln{\nodem}$ are the squared voltage magnitudes at buses~$\node$ and~$\nodem$,
\item
$\curn{\node}$ is the squared current magnitude on branch $(\node,\nodem)$,
\item
$\apfn{\node}$ and $\rpfn{\node}$ are the active and the reactive parts of the power flow over line $(\node,\nodem)$,
\item
$\resn{\node}$ and~$\rean{\node}$ are the resistance and the reactance of branch $(\node,\nodem)$,
\item
$\conn{\node}$ and~$\susn{\node}$ are the line conductance and susceptance at~$\node$.
\end{itemize}
Equation \eqref{flowconservationactive} and \eqref{flowconservationreactive} are the active and reactive flow conservation equations, ~\eqref{Ohm} is an expression of Ohm's law for the branch $(\node,\nodem)$, and~\eqref{powerflowrelaxed} is a SOCP relaxation of the definition of the power flow \cite{molzahn2019survey}. Equations~\eqref{powerflowbound} and  \eqref{powerflowboundm} are limitations on the  squared power flow magnitude on $(\node,\nodem)$, and~\eqref{voltagebound} gives lower and upper bounds on the voltage at~$\node$. For the coupling flow conservation laws, dual variables are attached, which are the DLMPs corresponding to active and reactive power. There exist theoretical sufficient conditions under which the relaxation \eqref{branchflowconstraints} is exact  \cite{farivar13,molzahn2019survey}.

For later reference, we point out that the network flow constraints \eqref{flowconservationactive}-\eqref{flowconservationreactive} can be compactly summarized as
$$ A_{0}\xDSO+\textstyle\sum_{a}A_{a}\xLA{\aggregator}=b$$ for suitably defined matrices $A_{0},A_{a}$ and  vector $b$.

\subsection{Load aggregators\acused{LA}}
The set of buses~$\Node$ is partitioned into a collection~$(\Nodea{\aggregator})_{\aggregator\in\Aggregator}$ of subsets, such that each node subset~$\Nodea{\aggregator}$ is managed by a \acl{LA}~$\aggregator\in\Aggregator$. Each \ac{LA} controls the flexible net power consumption ($p_{n,t}$) and generation at each node $\node\in\Nodea{\aggregator}$, given at time~$\tim$ by
\begin{subequations}\label{eq:aggregatorconstraints}
\begin{equation}\label{netactivepowerconsumed}
 \napcnt{\node}{\tim} = \apcnt{\node}{\tim} -\appnt{\node}{\tim}
 ,\qquad 
 \nrpcnt{\node}{\tim} = \rpcnt{\node}{\tim} -\rppnt{\node}{\tim}
 ,
\end{equation}
\noeqref{netactivepowerconsumed}%
for all $\node\in\Node$ and $\tim\in\Tim.$ $\apcnt{\node}{\tim}\geq 0$ is the consumption part and $\appnt{\node}{\tim}\geq 0$ is the production part of the power profile. Power consumption and production at the nodes are made flexible by the presence of deferrable loads (electric vehicles, water heaters) and \acdefp{DER}. The consumption at each node $\node\in\Node$ must satisfy a global energy demand $\lbmapcn{\node}$ over the full time window,
\begin{equation} \label{totalactivepowerconsumed}
 \sum_{\tim\in\Tim}\apcnt{\node}{\tim}\geq \lbmapcn{\node}
 , \qquad \forall \node\in\Node.
\end{equation}
\noeqref{totalactivepowerconsumed}%
Consumption and production are also constrained by power bounds and active to reactive power ratio:
\begin{align}\label{boundactivepowerconsumed}
& \lbapcnt{\node}{\tim} \leq \apcnt{\node}{\tim}\leq \ubapcnt{\node}{\tim}
 , &\forall \node\in\Node,\ \forall \tim\in\Tim,
\\
\label{reactivepowerconsumed} 
&\rpcnt{\node}{\tim} = \ratiopcn{\node}\apcnt{\node}{\tim}
 , & \forall \node\in\Node,\ \forall \tim\in\Tim,
 \\
\label{activepowerproduced}
&0 \leq \appnt{\node}{\tim} \leq \ubapc
 , & \forall \node\in\Node,\ \forall \tim\in\Tim,
\\
\label{reactivepowerproduced}
&\lbratioappnt{\node}{\tim}\appnt{\node}{\tim} \leq \rppnt{\node}{\tim} \leq \ubratioappnt{\node}{\tim}\appnt{\node}{\tim}
 , & \forall \node\in\Node,\ \forall \tim\in\Tim.
\end{align}
\noeqref{boundactivepowerconsumed}%
\noeqref{reactivepowerconsumed}%
\noeqref{activepowerproduced}%
\noeqref{reactivepowerproduced}%
\end{subequations}
Constraints \eqref{netactivepowerconsumed}-\eqref{reactivepowerproduced} define the feasible set $\XLA{\aggregator}$ of \ac{LA} decisions, containing vectors $\xLA{\aggregator}=(\napcn{\node},\nrpcn{\node})_{\node\in\Nodea{\aggregator}}$. 
\begin{remark}
We focus on the simple model of power profile constraints given by \eqref{eq:aggregatorconstraints}, which is well  adapted for some flexible electric appliances such as electric vehicles, and has been largely considered in the literature. Yet, as shown in \Cref{sec:genericAlgo}, our method and results apply to a much larger framework.
\end{remark}

Both, consumption and production, must be scheduled by the \ac{LA}, taking into account the current spot market prices, and other specific local factors characterizing the private objectives of the \ac{LA}. Formally, there is a convex cost function $\costLA{\aggregator}{\xLA{\aggregator}}$ which the \ac{LA} would like to unilaterally minimize, subject to private feasibility $\xLA{\aggregator}\in\XLA{\aggregator}$.

\subsection{The distribution system operator}
In order to guarantee stability of the distribution network, the \ac{DSO} takes the individual aggregators' decisions into account and adjusts the power flows so that the flow conservation constraints \eqref{flowconservationactive}-\eqref{flowconservationreactive}, together with the SOCP constraints \eqref{Ohm}-\eqref{voltagebound}, are satisfied. Let $\xDSO=(\napcn{0},\nrpcn{0},\vapf,\vrpf,\vvol,\vcur)$ denote the vector of the variables controlled by the \ac{DSO}, and define the \ac{DSO}'s feasible set $\XDSO = \{ \xDSO \vert \eqref{Ohm}-\eqref{voltagebound}\text{ hold for }\node\in\Node \}.$ Then, the set of \ac{DSO} decision variables inducing a physically meaningful network flow for a given tuple of \ac{LA} decisions $\xLAs$ is described as 
\begin{equation*}\label{eq:DSOfeas}
\mathcal{F}(\xLAs)=\{\xDSO\in\XDSO\vert\eqref{flowconservationactive}-\eqref{flowconservationreactive}\text{ hold for }\xLAs\}.
\end{equation*}
Denoting the \ac{DSO} cost function $\phi_{0}(\xDSO)$, we arrive at the \ac{DSO}'s decision problem 
\begin{equation}\label{eq:Psi}
\Psi(\xLAs)=\min\{\phi_{0}(\xDSO)\vert \xDSO\in\mathcal{F}(\xLAs)\},
\end{equation}
This represents the smallest costs to the \ac{DSO}, given the profile of flexible net consumption and generation at each affiliated node $n\in\Node_{\aggregator}$.
\section{Privacy-preserving DLMP Computation}
\label{sec:Algo}
Ww are facing a multi-agent optimization problem, in which \acp{LA} and a singe DSO aim for solving the AC-OPF problem by unilaterally solving their individual cost minimization problem. All these decision problems are coupled by the network flow constraints \eqref{flowconservationactive}-\eqref{flowconservationreactive}. 
%
%
Algorithm~\ref{algorithm:primaldual} proposes  a \acdef{PPDLMP}, in which the \ac{DSO} influences the decentralized decisions of the \acp{LA} by sending out information about prevailing \acp{DLMP}, and iteratively updates \acp{DLMP} based on the power profiles in the local markets. \\ 
%
\ac{PPDLMP} asks the \ac{DSO} to adjust \acsp{DLMP} based on the prevailing plans reported by the \acp{LA}. Once the price update is completed, a single \ac{LA} is appointed at random to adapt the power profile within the subnetwork this \ac{LA} manages. 
The local update of the \ac{LA} results in bid vector $w^{k}$, which will be fed into the \ac{DSO} final computational step to perform dispatch. Hence, \ac{PPDLMP} is based on \emph{block-coordinate primal updates}, involving pairs of the type $(\xDSO,\xLA{\aggregator})$ picked randomly with probability $1/\cardinality{\Aggregator}$ for every $\aggregator\in\Aggregator$. 


\smallskip
\begin{algorithmenv}[t]
\caption{\acl{PPDLMP} (PPDLMP)}
\label{algorithm:primaldual}
\removelatexerror
\RestyleAlgo{tworuled}%
\centering
\vspace{-2.5pt}
\begin{algorithm}[H]
\setcounter{AlgoLine}{0}
\DontPrintSemicolon
\SetAlgoNoLine%
\SetKwInOut{Parameters}{\texttt{\textbf{Parameters}}}
\SetKw{Output}{\texttt{\textbf{Output:}}}
\SetKwFor{Initatdo}{\texttt{Initialization at}}{\texttt{:}}{}
\SetKw{phantomInit}{\phantom{\textbf{Initialization:}}}
\SetKwFor{For}{\texttt{for}}{\texttt{do}}{}
\SetKwFor{Atdo}{\texttt{at}}{\texttt{do}}{}
\Parameters{$\nbaggregators=\cardinality{\Aggregator}$, $\dualstepsize>0$, \reviseagain{$\StepsizeDSOk{\iter}$, $\StepsizeLAk{\Aggregator}{\iter}$}
}
\nonl\Initatdo{\texttt{\textup{\textbf{each aggregator $\aggregator\in\Aggregator$}}}}{
\nonl $\xLAk{\Aggregator}{0}\old{=\SLAk{\Aggregator}{0}}\in\XLA{\aggregator}$\;
\nonl  \texttt{\textbf{send bid}}~$\LAaltdualLA{\aggregator}= \ALA{\aggregator}\xLAk{\aggregator}{0}  - \bvectorLA{\aggregator}$ \texttt{\textbf{to the  \ac{DSO}}}
}
\nonl\Initatdo{\texttt{\textbf{the \ac{DSO}}}}{ 
\nonl $\xDSOk{0}\old{=\SDSOk{0}}\in \XDSO$\;
\nonl $\LAaltdualk{0}= \dualstepsize \sum_{\aggregator\in\Aggregator}\LAaltdualLA{\aggregator} 
$, \quad $\dualk{0}= \LAaltdualk{0}+\dualstepsize  (\ADSO \xDSOk{0} -\bvectorDSO)$\;
}
\nonl\Output{\reviseagain{$\xallk{\iter}$, $\Sallk{\iter}=\frac{1}{\iter}\sum_{\altiter=1}^{\iter}\xallk{\altiter}$
}}\;
\smallskip
\SetAlgoLined
 \nonl \For{$\iter=0,1,2,\dots$}{
 \nonl \Atdo{\texttt{\textbf{the \ac{DSO}}}}{
\old{
\routinenl
$\ZDSOk{\iter} = (\frac{\iter}{\iter+1})\, \SDSOk{\iter} + (\frac{1}{\iter+1}) \,\xDSOk{\iter} $ \;
}
\routinenl
$\xDSOk{\iter+1} = \arg\min\limits_{\altxDSO\in\XDSO}\big\{ \inner{\grad\costDSO{\reviseagain{\xDSOk{\iter}}} + \ADSO^\transpose \dualk{\iter}}{\altxDSO}$\;\nonl \qquad\qquad\qquad\qquad\qquad\qquad$ + \frac{1}{2} \normx{\altxDSO-\xDSOk{\iter}}{\StepsizeDSOk{\iter}}^2 \big\} $
 \; \label{algorithm:PGDSO}
\old{
\routinenl$\SDSOk{\iter+1}  =   \ZDSOk{\iter}    +   \frac{1}{\iter+1} (\xDSOk{\iter+1} -\xDSOk{\iter}) $ \;%
}

}
\nonl\Atdo{\texttt{\textbf{\ac{LA}~$\aggregator$ drawn uniformly at random}}}{
\old{
\routinenl$\ZLAk{\aggregator}{\iter} = (\frac{\iter}{\iter+1}) \SLAk{\aggregator}{\iter} + (\frac{1}{\iter+1}) \xLAk{\aggregator}{\iter} $ \; }
\routinenl\texttt{\textbf{receive \acs{DLMP}}}~$\dualk{\iter}$ \texttt{\textbf{from \ac{DSO}}} \;
\label{algorithm:PGLA}%
\routinenl$ \xLAk{\aggregator}{\iter+1} = \arg\min\limits_{\altxLA{\aggregator}\in\XLA{\aggregator}}\big\{ \inner{\grad\costLA{\aggregator}{\reviseagain{\xLAk{\aggregator}{\iter}}} +\ALA{\aggregator}^\transpose \dualk{\iter}}{\altxLA{\aggregator}}
$ \; 
\nonl \qquad\qquad\qquad\qquad\qquad\qquad$ + \frac{\nbaggregators}{2} \normx{\altxLA{\aggregator}-\xLAk{\aggregator}{\iter}}{\StepsizeLAk{\aggregator}{\iter}}^2 \big\}
$ \; 
\old{
\routinenl$\SLAk{\aggregator}{\iter+1}  =   \ZLAk{\aggregator}{\iter}    +   \frac{\nbaggregators}{\iter+1} (\xLAk{\aggregator}{\iter+1} -\xLAk{\aggregator}{\iter}) $ \;%
}
\routinenl$ \LAaltaltdualk{\iter} = \ALA{\aggregator} (\xLAk{\aggregator}{\iter+1}-\xLAk{\aggregator}{\iter}) $ \;
}
\nonl\Atdo{\texttt{\textbf{each other aggregator~$\altaggregator\neq\aggregator$}}}{
\routinenl $\xLAk{\altaggregator}{\iter+1} = \xLAk{\altaggregator}{\iter} $ \;
\old{
\routinenl$\SLAk{\altaggregator}{\iter+1}  =   
(\frac{\iter}{\iter+1})\, \SLAk{\altaggregator}{\iter} + (\frac{1}{\iter+1})\, \xLAk{\altaggregator}{\iter}
$ \;%
}
}
 \nonl \Atdo{\texttt{\textbf{the \ac{DSO}}}}{
\routinenl \texttt{\textbf{receive bid}}~$\LAaltaltdualk{\iter} $ \texttt{\textbf{from \ac{LA}~$\aggregator$}} \;
\routinenl$\dualk{\iter+1} \hhm = \hhm \dualk{\iter}  \hhm+ \hhm \dualstepsize  [\ADSO (2\xDSOk{\iter+1}\hhm-\hhm\xDSOk{\iter}) - \bvectorDSO] \hhm+\hhm \LAaltdualk{\iter} 
\hhm +\hhm \dualstepsize (\nbaggregators \hhm+\hhm 1) \LAaltaltdualk{\iter}  $  \label{algorithm:dualk}\; 
\routinenl$\LAaltdualk{\iter+1} =  \LAaltdualk{\iter} + \dualstepsize  \LAaltaltdualk{\iter}  $  \;
}
}
 \end{algorithm}
\end{algorithmenv}
It is important to point out that, while executing \ac{PPDLMP}, the bus-specific data (like cost function, power profiles,etc.) \textit{remain private information}. This applies equally to the \ac{DSO} and the \ac{LA}. Coordination of the system-wide behavior is achieved via exchanging information about \emph{dual variables} only, describing the DLMPs and the expressed bids of the \ac{LA}s. In that sense, \ac{PPDLMP} describes a \emph{semi-distributed} multi-agent optimization scheme. 

\section{Primal-dual Block Coordinate Descent}
\label{sec:genericAlgo}
We study the convergence properties of \ac{PPDLMP} via the analysis of a more general and new block-coordinate descent method designed to solve composite convex optimization problems of the form 
\begin{equation}\label{genericP}
\begin{array}{ll} 
  \minimize\limits_{\xall\in\Xall} & 
  \reviseagain{ \left\{ \cost{\xall} + \risk{\xall} = \extcost{\xall} \right\} }
  \\ 
  \subjectto &   \xall\in\arg\min_{\altxall}\constraint{\altxall}
  .
 \end{array}
\end{equation}
where $\constraint{\xall} = \tfrac 1 2 \norm{\Aall\xall-\bvector}^{2},$ in which $\Aall\in\Real^{\dualdimension\times\primaldimension}$ and  $\bvector\in\Real^{\dualdimension}$. We assume that the decision variable is partitioned into $d$ blocks $\xall=(\xI{1},\dots,\xI{\dimension})^{\top}$ with $\xI{\coordinate}\in\Real^{\dimensioni{\coordinate}}$ and $ \sum_{\coordinate=1}^\dimension\dimensioni{\coordinate}=\primaldimension$. The separable cost function $\cost{\xall} = \sum_{\coordinate} \costI{\coordinate}{\xI{\coordinate}}$ is convex and smooth in each block. The non-smooth component $ \risk{\xall} =\sum_{\coordinate} \riskI{\coordinate}{\xI{\coordinate}}
$  is additively separable with respect to the~$\dimension$ block-coordinate directions, and write $\Aall=(\AI{1} \hdots \AI{\dimension})$ with $\AI{\coordinate}\in\Real^{\dualdimension\times\dimensioni{\coordinate}}$ for $\coordinate=1,\dots,\dimension$. We assume that $\riskfunction:\Real^\primaldimension\to (-\infty,+\infty]$ is a proper closed lower semi-continuous and prox-friendly function. In order to recover the OPF problem, we identify each function $\phi_{i}$ with a cost function of the \ac{DSO} or \ac{LA}, and $r_{i}$ is an indicator function of the feasible set $\XLA{\aggregator}$ and $\XDSO$, respectively. We also assume that there exists a positive semidefinite matrix $\Smoothness\in\Real^{m\times m}$ such that, for every $\xall,\altxall\in\reviseagain{\domain{\riskfunction}}
$, it holds that
\begin{equation}\label{smoothness}
 \cost{\altxall} \leq \cost{\xall} +   \inner{\grad\cost{\xall}}{\altxall-\xall}+ \tfrac{1}{2}\normx{\altxall-\xall}{\Smoothness}^2,
\end{equation} 
where $\normx{\cdot}{\Smoothness}\eqd\sqrt{\inner{\Smoothness\cdot}{\cdot}}$. If $\Lambda=\diag(\lambda_{1}\identityI{m_{1}},\ldots,\lambda_{d}\identityI{m_{d}})$, \eqref{smoothness} reduces to the well-known descent lemma for smooth functions with a Lipschitz continuous gradient \cite{Nes18}.  

Our approach is a block-coordinate implementation of the method developed in~\cite{malitsky2019primaldual} for linearly constrained optimization, lying midway between the celebrated Chambolle-Pock primal-dual splitting algorithm~\cite{chambolle11} and Tseng's accelerated proximal gradient~\cite{tseng08apgm}.
The present setting differs from~\cite{luke18}'s coordinate-descent interpretation of~\cite{malitsky2019primaldual} in that composite objective functions are considered, and block sampling is used for the coordinates. Precisely, we consider a set $\Block$ of blocks such that:
$$\forall \block \in \Block, \block \subset \{0\}\times \{1, \dots , d \} \ \text{ and }\  \textstyle\bigcup_{\block \in \Block} \block = \{0,1, \dots ,d \} \ .$$
Define the $m\times m$ weighting matrices  
$\Dimension\eqd \diag(1/\probai{1}\identityI{m_{1}},\dots,1/\probai{\dimension}\identityI{m_{p}})$ and $\DimensionI{\block}\eqd \diag\lbrack (1/\probai{\coordinate}\identityI{m_{i}})_{\coordinate\in\block} \rbrack$ and  for all $\block\in\Block$. Similarly, let $\Stepsizek{\iter} \eqd\diag(\Stepsizeki{\iter}{1},\dots,\Stepsizeki{\iter}{\dimension}) 
\succ 0$ be a block diagonal matrix and, for each $\block\in\Block$, define $\StepsizekI{\iter}{\block} \eqd \diag\lbrack (\Stepsizeki{\iter}{\coordinate})_{\coordinate\in\block} \rbrack$. If coordinate $i$ is selected for updating, a proximal-based update step, based on the linearization $
\linearizedlagrangiankI{\iter}{\coordinate}{\altxI{\coordinate}} \eqd \inner{\grad\costI{\coordinate}{\xIk{\coordinate}{\iter}}+\AI{\coordinate}^\transpose\dualk{\iter}}{\altxI{\coordinate}}$, is performed in parallel. This delivers the next iterate 
\[
x_{i}^{k+1}=\arg\min_{u_{i}}\{\xi^{k}_{i}(u_{i})+r_{i}(u_{i})+\frac{1}{2}\norm{u_{i}-x_{i}^{k}}^{2}_{P_{i}T_{i}}\}.
\]
\newcommand{\hideiftight}[1]{}%
\begin{algorithmenv}[t]
\caption{Primal-dual Block Coordinate Descent
\label{algorithm:genericprimaldual}}
\removelatexerror
\RestyleAlgo{tworuled}%
\centering
\vspace{-2.5pt}
\begin{algorithm}[H]
\setcounter{AlgoLine}{0}
\DontPrintSemicolon
\SetAlgoNoLine%
\SetKwInOut{Parameters}{{\textbf{Parameters}}}
\SetKwInOut{Init}{{\textbf{Initialization}}}
\SetKwInOut{Output}{{\textbf{Output}}}
\SetKwFor{For}{for}{do}{} 
\Parameters{$\Dimension$, 
$\dualstepsize>0$, $ \Stepsize $, 
$(\thetak{\iter})_{\iter\geq 0}$} 
\Init{$\xallk{0}\old{=\Sallk{0}}\in\Xall$, $\altdualk{0} = \dualstepsize (\Aall \xallk{0}  -  \bvector )$, $ \dualk{0} = (\dualstepsize/\thetak{0}) (\Aall\xallk{0}-\bvector)   $} 
\Output{\reviseagain{$\xallk{\iter}$, $\Sallk{\iter}=\frac{1}{\iter}\sum_{\altiter=1}^{\iter}\xallk{\altiter}$}}
\smallskip 
\SetAlgoLined
 \nonl \For{$\iter=0,1,2,\dots$}{
\old{
\routinenl
$\Zallk{\iter} = (1-\thetak{\iter}) \Sallk{\iter} + \thetak{\iter} \xallk{\iter} $ \; \label{algorithm:genericPDZallk}
}
\hideiftight{
$ \varthetak{\iter+1} = \thetak{\iter}  (1-\thetak{\iter+1}) /\thetak{\iter+1} $ \; \label{algorithm:genericvarthetak}
}%
\routinenl
\text{{\textbf{draw block~$\block\in\Block$ at random according to~$\Probas$} }} \;
\routinenl
\text{}$ \xIk{\block}{\iter+1} = \arg\min_{\altxI{\block}}\big\{ \inner{\grad\costI{\block}{\reviseagain{\xIk{\block}{\iter}}} + \AI{\block}^\transpose\dualk{\iter}}{\altxI{\block}}$\;\nonl
\qquad\qquad\qquad\qquad$
+ \riskI{\block}{\altxI{\block}} + \tfrac{1}{2} \normx{\altxI{\block}-\xIk{\block}{\iter}}{\DimensionI{\block}\StepsizeI{\block}}^2 \big\}
$ \; 
\routinenl
$\xIk{\minusblock}{\iter+1} = \xIk{\minusblock}{\iter} $ \; 
\old{
\routinenl
$\Sallk{\iter+1}  =   \Zallk{\iter}    +   \thetak{\iter} \Dimension (\xallk{\iter+1} -\xallk{\iter}) $ \;\label{algorithm:genericPDSallk}
}
\routinenl
$ \altdualk{\iter+1} = \altdualk{\iter}+ \dualstepsize \Aall ( \xallk{\iter+1}  -  \xallk{\iter} )
$ \; \label{algorithm:genericaltdualk}
\routinenl
$  \dualk{\iter+1} = \hideiftight{\varthetak{\iter+1}} \dualk{\iter} +  \hideiftight{\varthetak{\iter+1}} \dualstepsize \Aall   \Dimension (\xallk{\iter+1} -\xallk{\iter})  + \altdualk{\iter+1}
$ \; \label{algorithm:genericdualk}
}
\end{algorithm}
\end{algorithmenv}%
In Appendix \ref{sec:Algoequivalent} we show that \ac{PPDLMP} (Algorithm \ref{algorithm:primaldual}) is a special case of the more general primal-dual  Algorithm~\ref{algorithm:genericprimaldual}. 

In Algorithm~\ref{algorithm:genericprimaldual}, a sensible choice for~$\Stepsizek{\iter}$ is to set 
 \begin{equation}\label{earlydiagonalscaling}
 \StepsizeI{\coordinate} =
\identityI{\primaldimensioni{\coordinate}}/\stepsizei{\coordinate}
+\probai{\coordinate} \SmoothnessI{\coordinate}
  +\dualstepsize\AI{\coordinate}^\transpose\AI{\coordinate}\quad
(\coordinate=1,\dots,\dimension)
\end{equation}
with the constraint 
$T\succ\dualstepsize \constraintcovariance$, where $(\constraintcovariance)_{{\coordinate}{\altcoordinate}}={\probaij{\coordinate}{\altcoordinate}}\AI{\coordinate}^\transpose\AI{\altcoordinate}/({\probai{\coordinate}\probai{\altcoordinate}})$ and $\probaij{\coordinate}{\altcoordinate}=\proba{\coordinate,\altcoordinate\in\block}$.
A detailed analysis of the sequence generated by Algorithm~\ref{algorithm:genericprimaldual} yields our main result. 
\begin{theorem} \label{theorem:convergence}
Let~$\Xallsol$ denote the solution set of Problem~\eqref{genericP}, and let~$(\xallk{\iter})_{k}$ and~$(\Sallk{\iter})_{k}$  be issued by Algorithm~\ref{algorithm:genericprimaldual} \reviseagain{with $\Sallk{\iter}=\frac{1}{\iter}\sum_{\altiter=1}^{\iter}\xallk{\altiter}$ and} with $\reviseagain{\Stepsizek{\iter}},\dualstepsize$ satisfying~\eqref{earlydiagonalscaling}. Then, 
\begin{enumerate}[(i)]
 \item If there exists a Lagrange multiplier for Problem~\eqref{genericP}, then~$(\xallk{\iter})_{k}$ and~ $(\Sallk{\iter})_{k}$ converge a.s. to a solution of~\eqref{genericP} and $\constraint{\xallk{\iter}}-\constraintmin = \zero{1/\iter}$, $\constraint{\Sallk{\iter}}-\constraintmin = \magnitude{1/\iter^2}$ a.s.
\item If~$\Xallsol$ is a bounded set and~$\costfunction+\riskfunction$ is bounded from below, then a.s. all limit points of~$(\Sallk{\iter})_{k}$ belong to~$\Xallsol$ and $\constraint{\Sallk{\iter}}-\constraintmin = \zero{1/\iter}$.
\end{enumerate}
\end{theorem}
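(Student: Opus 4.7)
The plan is to derive a stochastic Lyapunov inequality for Algorithm~\ref{algorithm:genericprimaldual} and then extract the asymptotic statements in the usual primal-dual fashion. Write $\filtrationk{\iter}=\sigmaalgebrax{\xallk{0},\dots,\xallk{\iter}}$ for the natural filtration.

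First, I would exploit the optimality conditions of the block-proximal step. Letting $\block$ be the block drawn at iteration $\iter$, the prox update yields, for any $\altxI{\block}\in\domain{\riskI{\block}{\cdot}}$,
\begin{equation*}
\inner{\grad\costI{\block}{\xIk{\block}{\iter}}+\AI{\block}^\transpose\dualk{\iter}+\DimensionI{\block}\StepsizeI{\block}(\xIk{\block}{\iter+1}-\xIk{\block}{\iter})}{\altxI{\block}-\xIk{\block}{\iter+1}}+\riskI{\block}{\xIk{\block}{\iter+1}}\leq \riskI{\block}{\altxI{\block}}.
\end{equation*}
Evaluating at $\altxI{\block}=\xIsol{\block}$ for a primal solution $\xallsol$, combining with the descent lemma \eqref{smoothness} (applied block-wise using the fact that only block $\block$ moves, which is why the factor $1/\nbaggregators$ appears in~\eqref{earlydiagonalscaling}), and taking conditional expectation with respect to $\filtrationk{\iter}$ (which transforms $\DimensionI{\block}$-weighted quantities into $\Dimension$-weighted quantities in expectation) produces a per-iteration inequality of the form
\begin{equation*}
\expectation{\extcost{\xallk{\iter+1}}\cond\filtrationk{\iter}}-\extcost{\xallsol}+\inner{\dualk{\iter}}{\Aall(\expectation{\xallk{\iter+1}\cond\filtrationk{\iter}}-\xallsol)}\leq \tfrac{1}{2}\normx{\xallk{\iter}-\xallsol}{\Dimension\Stepsizek{\iter}}^2-\tfrac{1}{2}\expectation{\normx{\xallk{\iter+1}-\xallsol}{\Dimension\Stepsizek{\iter}}^2\cond\filtrationk{\iter}}-(\text{non-negative terms}).
\end{equation*}

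Second, I would handle the dual side using the specific Malitsky--Tam structure. The updates~\eqref{algorithm:genericaltdualk}--\eqref{algorithm:genericdualk} are engineered so that $\altdualk{\iter}=\dualstepsize(\Aall\xallk{\iter}-\bvector)$ for all $\iter$ (which can be proved by a simple induction), and $\dualk{\iter}$ plays the role of an extrapolated dual variable. Introducing a dual reference $\dualk{\ast}$ associated with a Lagrange multiplier (whose existence is assumed in part (i)), one can pair the primal inequality above with an identity coming from the dual recursion. The key algebraic manipulation, following the spirit of~\cite{malitsky2019primaldual} and its block-coordinate extension in~\cite{luke18}, identifies a Lyapunov functional of the form
\begin{equation*}
\lyapunovk{\iter}=\tfrac{1}{2}\normx{\xallk{\iter}-\xallsol}{\Dimension\Stepsizek{\iter}}^2+\tfrac{1}{2\dualstepsize}\norm{\dualk{\iter}-\dualk{\ast}}^2+\text{cross-terms},
\end{equation*}
and establishes a supermartingale-type inequality $\expectation{\lyapunovk{\iter+1}\cond\filtrationk{\iter}}\leq \lyapunovk{\iter}-\rho_\iter$ with $\rho_\iter\geq 0$. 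The step-size condition in~\eqref{earlydiagonalscaling} is exactly what is required to absorb the quadratic terms in $\Aall(\xallk{\iter+1}-\xallk{\iter})$ produced by the extrapolation, i.e. to certify $\rho_\iter\geq 0$ and to keep the cross-terms under control via the matrix~$\constraintcovariance$.

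Third, I would apply the Robbins--Siegmund lemma to the supermartingale inequality. This yields a.s. convergence of $\lyapunovk{\iter}$, summability $\sum_\iter \rho_\iter<\infty$ a.s., and in particular $\xallk{\iter+1}-\xallk{\iter}\to 0$ a.s. Standard Opial-type arguments, combined with the closedness of $\xallsol\mapsto\text{prox}$, then show that every cluster point of $(\xallk{\iter})$ is primal--dual optimal a.s., and uniqueness of the limit follows from the Fejér-like property implied by $\lyapunovk{\iter}$. Together with $\constraintfunction(\xallk{\iter})=\tfrac{1}{2}\dualstepsize^{-2}\norm{\altdualk{\iter}}^2$ and the summability of $\norm{\altdualk{\iter+1}-\altdualk{\iter}}^2$, one deduces $\constraint{\xallk{\iter}}-\constraintmin=\zero{1/\iter}$ a.s.

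Fourth, for the ergodic average $\Sallk{\iter}=\tfrac{1}{\iter}\sum_{\altiter=1}^{\iter}\xallk{\altiter}$, convexity of $\constraintfunction$ and Jensen's inequality give $\Aall\Sallk{\iter}-\bvector=\tfrac{1}{\iter}\sum_{\altiter}(\Aall\xallk{\altiter}-\bvector)=\tfrac{1}{\iter\dualstepsize}\altdualk{\iter}$, which telescopes via~\eqref{algorithm:genericaltdualk} and whose squared norm is $\magnitude{1/\iter^2}$ thanks to the a.s. boundedness of $(\altdualk{\iter})$ (a consequence of $\lyapunovk{\iter}$ being bounded a.s.). For part (ii), the existence of a Lagrange multiplier is dropped, but boundedness of $\Xallsol$ and lower boundedness of $\costfunction+\riskfunction$ are enough to keep the primal iterates in a bounded region, so the same telescoping argument delivers $\constraint{\Sallk{\iter}}-\constraintmin=\zero{1/\iter}$, and convexity together with a standard lower-semicontinuity argument identifies the cluster points of $(\Sallk{\iter})$ as elements of $\Xallsol$.

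The main obstacle, in my view, is the careful bookkeeping required to identify the correct Lyapunov functional: the interaction between the block-sampling matrix~$\Dimension$, the extrapolation carried by $\dualk{\iter}=\altdualk{\iter}+\dualstepsize\Aall\Dimension(\xallk{\iter+1}-\xallk{\iter})/\thetak{\iter}$, and the composite smoothness descent relation must balance in such a way that the cross-terms of the form $\inner{\dualk{\iter}-\dualk{\ast}}{\Aall(\xallk{\iter+1}-\xallk{\iter})}$ are dominated by the proximal penalty controlled by~\eqref{earlydiagonalscaling}. Once this balance is established, the rest of the analysis is the by-now standard combination of Robbins--Siegmund, Opial's lemma, and Jensen's inequality on the ergodic sequence.
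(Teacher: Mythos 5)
Your overall architecture differs from the paper's in a structural way, and one of your key steps rests on a false identity. The paper does \emph{not} build a saddle-point Lyapunov function containing a dual distance $\tfrac{1}{2\dualstepsize}\norm{\dualk{\iter}-\dualvariable^{\ast}}^2$; it reduces Algorithm~\ref{algorithm:genericprimaldual} to the Tseng-style purely primal scheme (Algorithm~\ref{algorithm:genericprimalonly}) via $\dualk{\iter}=(\dualstepsize/\thetak{\iter})(\Aall\Zallk{\iter}-\bvector)$, $\thetak{\iter}=1/(\iter+1)$, and runs an \emph{accelerated quadratic-penalty} analysis whose Lyapunov function~\eqref{lyapunovk} is $\iter[\hatextcostk{\iter}-\extcost{\xallsol}]+\dualstepsize\iter^{2}(\constraint{\Sallk{\iter}}-\constraintmin)+\tfrac12\normx{\xallk{\iter}-\xallsol}{\Dimension^{2}\Stepsizek{\iter}}^{2}$, where $\hatextcostk{\iter}$ is an extrapolated upper bound on $\extcost{\Sallk{\iter}}$ built from the $\gammacoefkl{\iter}{\altiter}$ weights. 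The Lagrange multiplier never enters the descent inequality~\eqref{argumentsequelthree}; it is only invoked afterwards (in the part deferred to the reference) to pass from residual rates to convergence of the iterates. This is what lets the same inequality serve both parts (i) and (ii): your Lyapunov function, anchored at a dual reference $\dualvariable^{\ast}$, does not exist in case (ii), and your claim that boundedness of $\Xallsol$ plus lower-boundedness of $\costfunction+\riskfunction$ ``keeps the primal iterates bounded'' is precisely the nontrivial point that would need proof.

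The concrete error: from $\altdualk{\iter}=\dualstepsize(\Aall\xallk{\iter}-\bvector)$ (which is indeed a valid induction) you deduce $\Aall\Sallk{\iter}-\bvector=\tfrac{1}{\iter}\sum_{\altiter=1}^{\iter}(\Aall\xallk{\altiter}-\bvector)=\tfrac{1}{\iter\dualstepsize}\altdualk{\iter}$. The last equality would require $\sum_{\altiter=1}^{\iter}\altdualk{\altiter}=\altdualk{\iter}$, which is false; the correct statement is $\Aall\Sallk{\iter}-\bvector=\tfrac{1}{\iter\dualstepsize}\sum_{\altiter=1}^{\iter}\altdualk{\altiter}$, and a.s.\ boundedness of $(\altdualk{\iter})$ then yields only $\norm{\Aall\Sallk{\iter}-\bvector}=\magnitude{1}$, not $\magnitude{1/\iter}$. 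So your route to $\constraint{\Sallk{\iter}}-\constraintmin=\magnitude{1/\iter^{2}}$ collapses. In the paper this rate is immediate because $\dualstepsize\iter^{2}(\constraint{\Sallk{\iter}}-\constraintmin)$ is itself a nonnegative summand of the Lyapunov function, whose a.s.\ boundedness follows from the supermartingale inequality~\eqref{argumentsequelthree} under the step-size condition~\eqref{earlydiagonalscaling}. (Your reading of~\eqref{earlydiagonalscaling} as the condition making the quadratic remainder matrix $\Dimension\Stepsizek{\iter}-\Smoothness-\dualstepsize\constraintcovariance$ positive definite is correct, and your ESO-type remark about conditional expectations turning $\DimensionI{\block}$-weighted norms into $\Dimension$-weighted ones matches the paper's Lemma on separable expectations; but the missing ingredients are the extrapolation lemma producing $\hatextcostk{\iter}$ and the identity~\eqref{twostarexpectation} for $\condexpectation{\filtrationk{\iter}}{\constraint{\Sallk{\iter+1}}}$, which is where the $\iter^{2}$-weighting of the ergodic residual actually comes from.)
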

The proof is provided in Appendix \ref{section:convergenceanalysis}.

\section{Numerical Results}
\label{sec:Numerics}
%

\newcommand{\hh}{\hspace{-2pt}}
\newcommand{\T}{\mathcal{T}}
\newcommand{\N}{\mathcal{N}}
\newcommand{\mc}[1]{\mathcal{#1}}
\newcommand{\nt}{_{n,t}}

\newcommand{\xx}{\bm{x}}

\newcommand{\uV}{\underline{V}}
\newcommand{\oV}{\overline{V}}
\newcommand{\oP}{\overline{P}}
\newcommand{\uP}{\underline{P}}
\newcommand{\oQ}{\overline{Q}}
\newcommand{\uQ}{\underline{Q}}
\newcommand{\osigma}{\overline{\sigma}}
\newcommand{\usigma}{\underline{\sigma}}
\newcommand{\odelta}{\overline{\delta}}
\newcommand{\udelta}{\underline{\delta}}
\newcommand{\otheta}{\overline{\theta}}
\newcommand{\utheta}{\underline{\theta}}
\newcommand{\oupsilon}{\overline{\upsilon}}
\newcommand{\uupsilon}{\underline{\upsilon}}
 
\newcommand{\epq}{ e^{pq}}
\newcommand{\oepq}{ \overline{e}^{pq}}
\newcommand{\uepq}{ \underline{e}^{pq}}

\newcommand{\uns}{ \textcolor{blue}{u} } 
\newcommand{\cun}{c^{\text{uns}}} 
\newcommand{\pRE}{p^\mathrm{p}} 
\newcommand{\qRE}{q^\mathrm{p}} 
\newcommand{\pC}{p^\mathrm{c}}
\newcommand{\qC}{q^\mathrm{c}}
\newcommand{\tauC}{\tau^\mathrm{c}}
\newcommand{\rlbRE}{\underline{\rho^\mathrm{p}}} 
\newcommand{\rubRE}{\overline{\rho^\mathrm{p}}} 

\newcommand{\upRE}{\overline{P}^\mathrm{p}} 
\newcommand{\opi}{\overline{\pi}}
\newcommand{\upi}{\underline{\pi}}
\newcommand{\llam}{\bm{\lambda}}
\newcommand{\pbid}{P^{\text{bid}}}

\newcommand{\pp}{\bm{p}}
\newcommand{\qq}{\bm{q}}
\newcommand{\kLoss}{\kappa^{\text{los}}}

\newcommand{\hX}{\hat{\X}}
\newcommand{\A}{\mathcal{A}}
\newcommand{\dlmp}{\lambda^{\text{p}}}
\newcommand{\dlmq}{\lambda^{\text{q}}}
\newcommand{\ddlmp}{\bm{\dlmp}}
\newcommand{\ddlmq}{\bm{\dlmq}}

\newif\ifreferonline
\referonlinefalse

We apply Algorithm \ref{algorithm:primaldual} to a realistic  15-bus network example based on the instance proposed in \cite{papavasiliou18}, over a time horizon $\mathcal{T} = \{0, 1\}$. 
The network parameters are specified in  \Cref{tab:network14param}.
Lines physical parameters $(R_n,X_n,S_n,B_n,V_n)$ are those used in~\cite{papavasiliou18}.
While \cite{papavasiliou18} considers \textit{fixed} loads, here we consider variable, flexible active and reactive loads as specified in \eqref{eq:aggregatorconstraints} and with parameters  $(\lbapcn{\node},\ubapcn{\node},E_n,\tauC_n)_n$  generated  based on the values of  \cite{papavasiliou18}; see also \cite{jacquot20}.

%
Bus $11$ is the only bus to offer renewable production, with $ \bm{\upRE}_{11} \eqd [0.438,0.201]$ and $\rlbRE = \rubRE = 0$ (the renewable production is purely active power). 
The bounds $(\uV_n,\oV_n)$ are set to 0.81 and $1.21$ for each  $n\in\N$, while $V_0=1.0$.

We consider a zero cost function for each \ac{LA} ($\phi_a=0$ for each $a\in\A$), meaning that \acp{LA} are indifferent to  consumption profiles for as long as their feasibility constraints are satisfied.
This is a reasonable assumption in practice for certain types of consumption flexibilities (electric vehicles, batteries).
We consider the \ac{DSO} objective
$$ \phi(\xx)= \phi_0(\xx_0)=  \sum_{t\in\T}c_t( \pRE_{0t}) + k^\text{loss} \sum_{n,t} R_n \ell_{nt},$$
with loss penalization factor $k^\text{loss}=0.001$ and with:
$$c_0:p \mapsto 2p + p^2 , \ c_1 : p \mapsto p , $$
giving an \textit{expensive} time period and a \textit{cheap} one, which can be interpreted as peak and offpeak pricing periods.
\newcommand{\hhh}{\hspace{-3pt}}
\begin{table}[!ht]
 \centering
\begin{scriptsize}
\setlength{\tabcolsep}{1.5pt}
\begin{tabular}{ccccccccc}
\hline
 $n$ &      $S_{n}$ &      $R_{n}{\cdot}10^3$ &      $X_{n}{\cdot}10^3$ &   $B_{n}{\cdot}10^3$ &       $\uP_{n}$ &               $\oP_{n}$ &     $ E_{n} $&    $\tau^{c}_{n}$ \\
\hline
  1 &  2.000 &     1.0 &   120.0 &     1.1 &   [0.593, 0.256] &    [1.566, 1.539] &  2.213 &  0.234 \\
  2 &  0.256 &    88.3 &   126.2 &     2.8 &       [0.000, 0.000] &        [0.000, 0.000] &  0.000 &  0.000 \\
  3 &  0.256 &   138.4 &   197.8 &     2.4 &   [0.003, 0.011] &     [0.020, 0.035] &  0.047 &  0.418 \\
  4 &  0.256 &    19.1 &    27.3 &     0.4 &   [0.015, 0.013] &    [0.027, 0.019] &  0.033 &  0.249 \\
  5 &  0.256 &    17.5 &    25.1 &     0.8 &   [0.021, 0.024] &    [0.043, 0.053] &  0.072 &  0.251 \\
  6 &  0.256 &    48.2 &    68.9 &     0.6 &   [0.017, 0.001] &    [0.032, 0.037] &  0.039 &  0.251 \\
  8 &  0.256 &    40.7 &    58.2 &     1.2 &   [0.021, 0.009] &     [0.040, 0.039] &  0.049 &  0.251 \\
  7 &  0.256 &    52.3 &    74.7 &     0.6 &  [-0.233,-0.210] &  [-0.173,-0.115] & -0.352 &  0.000 \\
  9 &  0.256 &    10.0 &    14.3 &     0.4 &   [0.008, 0.002] &    [0.032, 0.028] &  0.015 &  0.620 \\
 10 &  0.256 &    24.1 &    34.5 &     0.4 &   [0.004, 0.001] &     [0.024, 0.040] &  0.013 &  0.300 \\
 11 &  0.256 &    10.3 &    14.8 &     0.1 &     [0.010, 0.010] &    [0.015, 0.024] &  0.028 &  0.250 \\
 12 &  0.600 &     1.0 &   120.0 &     0.1 &   [0.243, 0.057] &    [0.642, 0.625] &  0.895 &  0.208 \\
 13 &  0.204 &   155.9 &   111.9 &     0.2 &     [0.001, 0.000] &    [0.003, 0.003] &  0.003 &  0.571 \\
 14 &  0.204 &    95.3 &    68.4 &     0.1 &   [0.015, 0.012] &    [0.032, 0.042] &  0.042 &  0.371 \\
\hline
\end{tabular}
\end{scriptsize}
\caption{Parameters for the 15 buses network based on  \cite{papavasiliou18}}
\label{tab:network14param}
\end{table}
%
%
%

The solution obtained by Algorithm~\ref{algorithm:primaldual}
 after 2000 iterations is illustrated in  \Cref{fig:resFlows}, which displays the active flows directions as well as the \ac{DLMP} values.
 
\begin{figure}[h]
\begin{subfigure}{0.49\columnwidth} \centering
  \begin{scriptsize}
\begin{tikzpicture}[scale=0.3]
\node [draw,circle] (node0) at (0,0) {0} ;
\node [draw,circle] (node1) at (0.0, -3.5) {1} ;
 \draw [-latex',green,line width=2pt] (node0) -- (node1); 
  
\node [left= 0.1 cm of node1] (res0) {\pbox{2.5cm}{
$3.721$ \\ 
$0.012$ \\ 
}} ;
\node [draw,circle] (node2) at (0.0, -7.0) {2} ;
 \draw [latex'-, red,line width=2pt] (node1) -- (node2); 
  
\node [left= 0.1 cm of node2] (res0) {\pbox{2.5cm}{
$3.603$ \\ 
$0.048$ \\ 
}} ;
\node [draw,circle] (node3) at (0.0, -10.5) {3} ;
 \draw [latex'-, red,line width=2pt] (node2) -- (node3); 
  
\node [left= 0.1 cm of node3] (res0) {\pbox{2.5cm}{
$3.421$ \\ 
$0.092$ \\ 
}} ;
\node [draw,circle] (node4) at (0.0, -14.0) {4} ;
 \draw [-latex',green,line width=2pt] (node3) -- (node4); 
  
\node [left= 0.1 cm of node4] (res0) {\pbox{2.5cm}{
$3.429$ \\ 
$0.093$ \\ 
}} ;
\node [draw,circle] (node5) at (0.0, -17.5) {5} ;
 \draw [-latex',green,line width=2pt] (node4) -- (node5); 
  
\node [left= 0.1 cm of node5] (res0) {\pbox{2.5cm}{
$3.433$ \\ 
$0.094$ \\ 
}} ;
\node [draw,circle] (node6) at (0.0, -21.0) {6} ;
 \draw [-latex',green,line width=2pt] (node5) -- (node6); 
  
\node [left= 0.1 cm of node6] (res0) {\pbox{2.5cm}{
$3.439$ \\ 
$0.096$ \\ 
}} ;
\node [draw,circle] (node8) at (3.5, -14.0) {8} ;
 \draw [latex'-, red, style = dashed ,line width=2pt] (node3) -- (node8); 
  
\node [right= 0.1 cm of node8] (res0) {\pbox{2.5cm}{
$0.004$ \\ 
$0.279$ \\ 
}} ;
\node [draw,circle] (node7) at (3.5, -17.5) {7} ;
 \draw [latex'-, red,line width=2pt] (node8) -- (node7); 
  
\node [below= 0.1 cm of node7] (res0) {\pbox{2.5cm}{
$-0.003$ \\ 
$0.279$ \\ 
}} ;
\node [draw,circle] (node9) at (7.0, -17.5) {9} ;
 \draw [latex'-, red,line width=2pt] (node8) -- (node9); 
  
\node [right= 0.1 cm of node9] (res0) {\pbox{2.5cm}{
$0.003$ \\ 
$0.279$ \\ 
}} ;
\node [draw,circle] (node10) at (7.0, -21.0) {10} ;
 \draw [latex'-, red,line width=2pt] (node9) -- (node10); 
  
\node [right= 0.1 cm of node10] (res0) {\pbox{2.5cm}{
$0.001$ \\ 
$0.279$ \\ 
}} ;
\node [draw,circle] (node11) at (7.0, -24.5) {11} ;
 \draw [latex'-, red,line width=2pt] (node10) -- (node11); 
  
\node [right= 0.1 cm of node11] (res0) {\pbox{2.5cm}{
$-0.0$ \\ 
$0.279$ \\ 
}} ;
\node [draw,circle] (node12) at (3.5, -3.5) {12} ;
 \draw [-latex',green,line width=2pt] (node0) -- (node12); 
  
\node [right= 0.1 cm of node12] (res0) {\pbox{2.5cm}{
$3.719$ \\ 
$0.001$ \\ 
}} ;
\node [draw,circle] (node13) at (3.5, -7.0) {13} ;
 \draw [-latex',green,line width=2pt] (node12) -- (node13); 
  
\node [right= 0.1 cm of node13] (res0) {\pbox{2.5cm}{
$3.758$ \\ 
$0.016$ \\ 
}} ;
\node [draw,circle] (node14) at (3.5, -10.5) {14} ;
 \draw [-latex',green,line width=2pt] (node13) -- (node14); 
  
\node [right= 0.1 cm of node14] (res0) {\pbox{2.5cm}{
$3.781$ \\ 
$0.024$ \\ 
}} ;
\end{tikzpicture}

\end{scriptsize}
 \subcaption{ t=0}
\end{subfigure}
\begin{subfigure}{0.49\columnwidth} \centering
\centering
 \begin{scriptsize}
\begin{tikzpicture}[scale=0.3]
\node [draw,circle] (node0) at (0,0) {0} ;
\node [draw,circle] (node1) at (0.0, -3.5) {1} ;
 \draw [-latex',green,line width=2pt] (node0) -- (node1); 
  
\node [left= 0.1 cm of node1] (res0) {\pbox{2.5cm}{
$1.004$ \\ 
$0.005$ \\ 
}} ;
\node [draw,circle] (node2) at (0.0, -7.0) {2} ;
 \draw [latex'-, red,line width=2pt] (node1) -- (node2); 
  
\node [left= 0.1 cm of node2] (res0) {\pbox{2.5cm}{
$0.98$ \\ 
$0.02$ \\ 
}} ;
\node [draw,circle] (node3) at (0.0, -10.5) {3} ;
 \draw [latex'-, red,line width=2pt] (node2) -- (node3); 
  
\node [left= 0.1 cm of node3] (res0) {\pbox{2.5cm}{
$0.942$ \\ 
$0.041$ \\ 
}} ;
\node [draw,circle] (node4) at (0.0, -14.0) {4} ;
 \draw [-latex',green,line width=2pt] (node3) -- (node4); 
  
\node [left= 0.1 cm of node4] (res0) {\pbox{2.5cm}{
$0.946$ \\ 
$0.042$ \\ 
}} ;
\node [draw,circle] (node5) at (0.0, -17.5) {5} ;
 \draw [-latex',green,line width=2pt] (node4) -- (node5); 
  
\node [left= 0.1 cm of node5] (res0) {\pbox{2.5cm}{
$0.949$ \\ 
$0.043$ \\ 
}} ;
\node [draw,circle] (node6) at (0.0, -21.0) {6} ;
 \draw [-latex',green,line width=2pt] (node5) -- (node6); 
  
\node [left= 0.1 cm of node6] (res0) {\pbox{2.5cm}{
$0.951$ \\ 
$0.044$ \\ 
}} ;
\node [draw,circle] (node8) at (3.5, -14.0) {8} ;
 \draw [latex'-, red, style = dashed ,line width=2pt] (node3) -- (node8); 
  
\node [right= 0.1 cm of node8] (res0) {\pbox{2.5cm}{
$0.003$ \\ 
$0.115$ \\ 
}} ;
\node [draw,circle] (node7) at (3.5, -17.5) {7} ;
 \draw [latex'-, red,line width=2pt] (node8) -- (node7); 
  
\node [below= 0.1 cm of node7] (res0) {\pbox{2.5cm}{
$-0.0$ \\ 
$0.115$ \\ 
}} ;
\node [draw,circle] (node9) at (7.0, -17.5) {9} ;
 \draw [latex'-, red,line width=2pt] (node8) -- (node9); 
  
\node [right= 0.1 cm of node9] (res0) {\pbox{2.5cm}{
$0.002$ \\ 
$0.115$ \\ 
}} ;
\node [draw,circle] (node10) at (7.0, -21.0) {10} ;
 \draw [latex'-, red,line width=2pt] (node9) -- (node10); 
  
\node [right= 0.1 cm of node10] (res0) {\pbox{2.5cm}{
$0.001$ \\ 
$0.115$ \\ 
}} ;
\node [draw,circle] (node11) at (7.0, -24.5) {11} ;
 \draw [latex'-, red,line width=2pt] (node10) -- (node11); 
  
\node [right= 0.1 cm of node11] (res0) {\pbox{2.5cm}{
$-0.0$ \\ 
$0.115$ \\ 
}} ;
\node [draw,circle] (node12) at (3.5, -3.5) {12} ;
 \draw [-latex',green, style = dashed ,line width=2pt] (node0) -- (node12); 
  
\node [right= 0.1 cm of node12] (res0) {\pbox{2.5cm}{
$3.567$ \\ 
$0.732$ \\ 
}} ;
\node [draw,circle] (node13) at (3.5, -7.0) {13} ;
 \draw [-latex',green,line width=2pt] (node12) -- (node13); 
  
\node [right= 0.1 cm of node13] (res0) {\pbox{2.5cm}{
$3.583$ \\ 
$0.737$ \\ 
}} ;
\node [draw,circle] (node14) at (3.5, -10.5) {14} ;
 \draw [-latex',green,line width=2pt] (node13) -- (node14); 
  
\node [right= 0.1 cm of node14] (res0) {\pbox{2.5cm}{
$3.593$ \\ 
$0.741$ \\ 
}} ;
\end{tikzpicture}

\end{scriptsize}
\subcaption{ t=1 }
\end{subfigure}%
 \caption{Directions of  active flows $\bm{f}$  and \acp{DLMP} $(y^p,y^q )$ at the solution given by Algorithm \protect\ref{algorithm:primaldual}. Saturated lines are dashed.}
 \label{fig:resFlows}
 \end{figure}

The solutions show that the active (and reactive) \acp{DLMP} obtained for each time period are close to the \acp{DLMP} at the root node $(\dlmpn0, \dlmqn0)$, with the following exceptions:
\begin{itemize}
\item For the branch composed of nodes $8,7,9,10,11$,  active \acp{DLMP} are close to~$0.0$ 
  due to the presence of renewable production (at null cost) at node 11, and of negative load at node $7$, which together fully compensate for the demand on this branch.
  Since Line $(3,8)$ is saturated, no energy can be exported further.
\item  Active \acp{DLMP} on the branch composed of nodes $(12,13,14)$ at  $t=1$ are much larger than on other nodes: this is explained by the congestion  of line $(0,12)$.
\item  The \ac{DLMP} for node 7 and $t=0$ is \emph{strictly negative}: the (negative) consumption for this node is at its upper bound $p_{7,0}=\overline{P}_{7,0}=-0.173$.  The negative \ac{DLMP} suggests that the system will be better off if less power is {injected} by node 7. 
\end{itemize}


\begin{figure}[!ht]
\centering
\includegraphics[width=1.0\columnwidth]{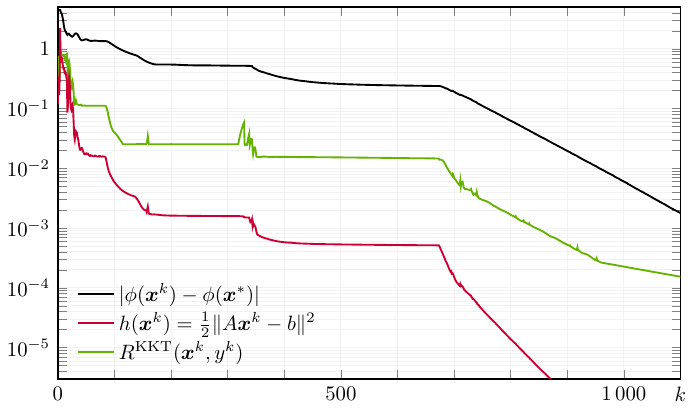}
 \caption{Convergence of last iterate $\xallk{\iter}$%
}
  \label{fig:lastiterate}  
\end{figure} 
\begin{figure}[!ht]
\centering
\includegraphics[width=1.0\columnwidth]{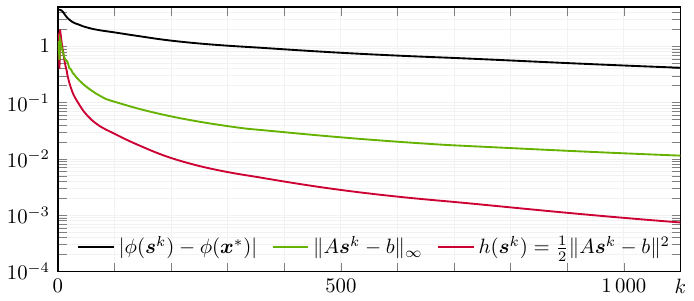}
 \caption{Convergence of ergodic average~$\Sallk{\iter}$} 
 \label{fig:ergodicaverage} 
\end{figure}

Convergence of Algorithm~\ref{algorithm:primaldual} for the 15-bus network is shown in \Cref{fig:lastiterate,fig:ergodicaverage}.
\Cref{fig:lastiterate} displays 
the convergence of the last iterate with respect to various criteria: convergence of~$\cost{\xallk{\iter}}$  to the optimal cost~$\cost{\xallsol}$, convergence to zero of the primal residuals~$\constraint{\xallk{\iter}}$ 
 and convergence to zero of the 
KKT residual
$
 \mixedcriterion{\xallk{\iter}}{\dualk{\iter}}=
 \bigdistt{\infty}{(\partial_{\xall},-\partial_{\dualvariable})\lagrangian{\xallk{\iter}}{\dualk{\iter}}}{0},
$

where $\lagrangian{\xall}{\dualvariable}=\extcost{\xall}+\inner{\dualvariable}{\Aall\xall-\bvector}$ denotes the Lagrangian of~\eqref{genericP},~\cite{luke18}; as well as the convergence of the \acp{DLMP} $\dualk{\iter}$ to stationarity.
\Cref{fig:ergodicaverage} shows the convergence to zero of the primal infeasiblity in the ergodic average~$\Sallk{\iter}$, as predicted by~\Cref{theorem:convergence}. 


\section{Conclusion}
\label{sec:conclusion}
We developed a novel distributed and privacy-preserving algorithm for the computation of distribution locational marginal prices. Our computational strategy builds on extends state-of-the-art block coordinate descent algorithms for convex optimization problems with affine coupling constraints. 
Non-convex versions of \ac{PPDLMP} will be investigated in the future. We also plan to conduct extensions of this work where the electric network is exposed to stochastic uncertainty. 
\appendices
\label{sec:appendix}
%

\section{The relation of Algorithm \ref{algorithm:genericprimaldual} to Algorithm \ref{algorithm:primaldual} }
\label{sec:Algoequivalent}
In this section we show that Algorithm \ref{algorithm:genericprimaldual} contains Algorithm~\ref{algorithm:primaldual} as a special case. In the latter algorithm, the sampling takes values from subsets of the form $\block=\{0,\aggregator\}$, where  $\aggregator\in\{1,\ldots,\cardinality{\Aggregator}\}\equiv\{1,\ldots,\nbaggregators\}$, with probability $\ProbaI{\block}=1/\nbaggregators$ for each~$\block$. Thus, $\dimension=\nbaggregators+1$, $\probai{0}=1$, and $\probai{\coordinate}=1/{\nbaggregators}$ if $\coordinate\in\{1,\ldots,\nbaggregators\}$. The weighting matrix~$\Dimension$ is given by $\Dimension=\diag(\identityI{m_{0}},\nbaggregators\identityI{m_{1}},\ldots,\nbaggregators\identityI{m_{\nbaggregators}})$, where $m_{0}$ is the dimension of the feasible set of the \ac{DSO}, and $m_{a}$ is the dimension of the feasible set of aggregator $a\in\Aggregator$. Now, define the function~$\riskfunction$ in~\eqref{genericP} as $  \risk{\xall} =\riskDSO{\xDSO} + \sum_{\aggregator\in\Aggregator} \riskLA{\aggregator}{\xLA{\aggregator}}
$, where $\riskDSOfunction=\indicatorfunction{\xDSO\in\XDSO}$ and $\riskLAfunction{\aggregator}=\indicatorfunction{\xLA{\aggregator}\in\XLA{\aggregator}}$
, in which $\indicatorfunction{\cdot}$ denotes the indicator function.
If the load aggregator~$\aggregator$ is chosen at step~$\iter$, Line~\ref{algorithm:genericdualk} in Algorithm~\ref{algorithm:genericprimaldual} becomes
\begin{align*}
&\dualk{\iter+1}=\dualk{\iter} +  \dualstepsize \Aall   \Dimension (\xallk{\iter+1} -\xallk{\iter})  + \altdualk{\iter+1}\\
&=\dualk{\iter} +  \dualstepsize \ADSO    (\xDSOk{\iter+1} -\xDSOk{\iter})+  \dualstepsize \nbaggregators \ALA{\aggregator}   
(\xLAk{\aggregator}{\iter+1} -\xLAk{\aggregator}{\iter})  + \altdualk{\iter+1}\\
&=\dualk{\iter} {+}\,  \dualstepsize [\ADSO    (2\xDSOk{\iter+1} {-}\,\xDSOk{\iter}) {-}\,\bvectorDSO] {+}\,  \dualstepsize \nbaggregators \ALA{\aggregator}   
(\xLAk{\aggregator}{\iter{+}1} {-}\,\xLAk{\aggregator}{\iter}) {+}\, \LAaltdualk{\iter{+}1}
\end{align*}
where we define $\LAaltdualk{\iter}=\altdualk{\iter} - \dualstepsize[ \ADSO    \xDSOk{\iter}-\bvectorDSO]$.
Exploiting Line~\ref{algorithm:genericaltdualk} in Algorithm~\ref{algorithm:genericprimaldual}, we find that~$\LAaltdualk{\iter}$ can be computed locally  and inductively by choosing the initial condition $\LAaltdualk{0}=\dualstepsize \sum_{\aggregator\in\Aggregator}(\ALA{\aggregator}\xLAk{\aggregator}{0}  - \bvectorLA{\aggregator}) 
$ initially, then by letting 
\begin{equation*}\label{LAaltdualk}
 \LAaltdualk{\iter+1} =  \LAaltdualk{\iter} + \dualstepsize   \ALA{\aggregator} (\xLAk{\aggregator}{\iter+1}-\xLAk{\aggregator}{\iter}) 
 \equiv 
 \LAaltdualk{\iter} + \dualstepsize  \LAaltaltdualk{\iter} 
 \end{equation*}
 where $\LAaltaltdualk{\iter} =  \ALA{\aggregator} (\xLAk{\aggregator}{\iter+1}-\xLAk{\aggregator}{\iter})$, and Line~\ref{algorithm:genericdualk}  rewrites as
 \begin{equation*}\label{newDSOLAdualk}
\dualk{\iter+1} 
=
\dualk{\iter} +  \dualstepsize [ \ADSO    (2\xDSOk{\iter+1} -\xDSOk{\iter})-\bvectorDSO]+  \dualstepsize (\nbaggregators+1) \LAaltaltdualk{\iter}  + \LAaltdualk{\iter} .
\end{equation*}

\section{Convergence analysis}
 \label{section:convergenceanalysis}
The proof for Theorem~\ref{theorem:convergence} 
uses the reduction of Algorithm~\ref{algorithm:genericprimaldual} to the simpler Algorithm~\ref{algorithm:genericprimalonly}. It is straightforward to show the equivalence between these two schemes if we set~$\thetak{\iter}=1/(\iter+1)$
and $\dualk{\iter} = (\dualstepsize/\thetak{\iter}) (\Aall\Zallk{\iter}-\bvector) $ \cite{malitsky17}.
\begin{algorithmenv}[t]
\caption{Reduction of Algorithm~\ref{algorithm:genericprimaldual} to the form~\cite{tseng08apgm}\label{algorithm:genericprimalonly}}
\removelatexerror
\RestyleAlgo{tworuled}%
\centering
\vspace{-2.5pt}
\begin{algorithm}[H]
\setcounter{AlgoLine}{0}
\DontPrintSemicolon
\SetAlgoNoLine%
\SetKwInOut{Parameters}{{\textbf{Parameters}}}
\SetKwInOut{Init}{{\textbf{Initialization}}}
\SetKwFor{For}{for}{do}{}
\Parameters{$\Dimension$, 
$\dualstepsize>0$, $ \reviseagain{\Stepsizek{\iter}} $, 
$(\thetak{\iter})_{\iter\geq 0}$} 
\Init{$\xallk{0}=\Sallk{0}\in\Xall$} 
\smallskip 
\SetAlgoLined
 \nonl \For{$\iter=0,1,2,\dots$}{
%
%
\routinenl 
$\Zallk{\iter} = (1-\thetak{\iter}) \Sallk{\iter} + \thetak{\iter} \xallk{\iter} $ \; \label{algorithm:genericZallk}
\routinenl
\text{{\textbf{draw block~$\block\in\Block$ at random according to~$\Probas$} }} \;
\routinenl \label{algorithm:genericxIk}
$ \xIk{\block}{\iter+1} = \arg\min_{\altxI{\block}}\big\{ \inner{\grad\costI{\block}{\reviseagain{\xIk{\block}{\iter}}} + \frac\dualstepsize{\thetak{\iter}}\gradI{\block} \constraint{\Zallk{\iter}}}{\altxI{\block}}$\;\nonl
\qquad\qquad\qquad\qquad\  $ + \riskI{\block}{\altxI{\block}} + \frac{1}{2} \normx{\altxI{\block}-\xIk{\block}{\iter}}{\DimensionI{\block}\StepsizekI{\iter}{\block}}^2 \big\}
$ \; 
\routinenl
$\xIk{\minusblock}{\iter+1} = \xIk{\minusblock}{\iter} $ \; \label{algorithm:genericxmIk} 
%
%
\routinenl
$\Sallk{\iter+1}  =   \Zallk{\iter}    +   \thetak{\iter} \Dimension (\xallk{\iter+1} -\xallk{\iter}) $ \; \label{algorithm:genericSallk}
}
 \end{algorithm}
\end{algorithmenv}
For analysis purposes we introduce the auxiliary sequence
\begin{equation}\label{generichatxall}
\begin{array}{l}
\hatxallk{\iter+1} =  \arg\min_{\altxall}\big\{ \inner{\grad\cost{\reviseagain{\xallk{\iter}}} + \frac\dualstepsize{\thetak{\iter}}\grad \constraint{\Zallk{\iter}}}{\altxall}
\qquad\ \\\hfill
+ \risk{\altxall} + \frac{1}{2} \normx{\altxall-\xallk{\iter}}{\Dimension\Stepsizek{\iter}}^2 \big\}.
\end{array}
\end{equation} 
The iterate $\hatxallk{\iter+1}$ corresponds to the next fictitious state if all coordinates were to perform an update in parallel. We now illustrate the main steps involved in proving convergence of the iterates produced by running Algorithm \ref{algorithm:genericprimalonly}.

\subsection{Separable expectations for block coordinate sampling}
For $i\in\{1,\ldots,d\}$ let $U_{i}$ be the $m\times m$ block unitary matrix of the form $U_{i}=\diag(0,\ldots,\identityI{m_{i}},0,\ldots,0)$. Clearly $\sum_{i=1}^{d}U_{i}=\identityI{m}$, and applying the matrix $U_{i}$ to the left of a vector $\dxall=(\dxall_{1},\ldots,\dxall_{d})^{\top}$ gives $U_{i}\dxall=(0,\ldots,\dxall_{i},\ldots,0)^{\top}\in\Real^{m}$. For $\block\in\Block$, define the $m\times m$ matrix $\UnitI{\block}\eqd\sum_{i\in I}U_{i}.$ We have $\expectation{\UnitI{\block}\Dimension}=\identityI{m}$, with $\block \sim \mathcal{U}(\Block)$, and we define $\constraintcovariance = \expectation{\UnitI{\block}\Dimension\Aall^\transpose\Aall\Dimension\UnitI{\block}}$.
It follows from the quadratic form of~$\constraintfunction$ that 
\begin{equation} \label{meanconstraintdeviation}
 \expectation{ \constraint{\xall+\UnitI{\block}\Dimension\dxall} }
 =
  \constraint{\xall} +  \inner{\grad\constraint{\xall}}{\dxall}+ \frac{1}{2}\normx{\dxall}{\constraintcovariance}^2
  .
\end{equation}
\old{
Similarly, since~$\costfunction$ is smooth,  we find form~\eqref{smoothness},
\begin{equation} \label{meancostdeviation}
 \expectation{ \cost{\xall+\UnitI{\block}\Dimension\dxall} }
 \leq
  \cost{\xall} +  \inner{\grad\cost{\xall}}{\dxall}+ \frac{1}{2}\normx{\dxall}{\costcovariance}^2 
  ,
\end{equation}
where $\costcovariance=\expectation{\UnitI{\block}\Dimension\Smoothness\Dimension\UnitI{\block}}$.
}

Let $\filtrationk{\iter}\defeq 
\sigmaalgebrax{\xallk{0},\Sallk{0},\Zallk{0},\dots,\xallk{\iter},\Sallk{\iter},\Zallk{\iter}}
$ denote the history of the process up to step~$\iter$.
We infer the following result for Algorithm~\ref{algorithm:genericprimalonly}, which corresponds to an Expected Separable Overapproximation (ESO), as introduced in \cite{RicTak14,fercoq15,RicTak16}.
\begin{lemma} \label{lem:writehofs}In Algorithm~\ref{algorithm:genericprimalonly},
\begin{align*} 
 \condexpectation{\filtrationk{\iter}}{ \constraint{\Sallk{\iter+1}} }
 & =
  \constraint{\Zallk{\iter}}   \!+\! \thetak{\iter} \inner{\grad\constraint{\Zallk{\iter}}}{\hatxallk{\iter+1}\!\!-\!\xallk{\iter}} \\ 
 & + \frac{\thetak{\iter}^2}{2}\Normx{\hatxallk{\iter+1}\!\!-\!\xallk{\iter}}{\constraintcovariance}^2
\end{align*} 
\end{lemma}
\begin{proof}
Lines~\ref{algorithm:genericxIk} and~\ref{algorithm:genericxmIk} in Algorithm~\ref{algorithm:genericprimalonly} reduce to
$ 
\xallk{\iter+1} = \xallk{\iter} + \UnitI{\block} (\hatxallk{\iter+1}-\xallk{\iter}) 
$. It follows that 
 Line~\ref{algorithm:genericSallk} rewrites as 
\begin{equation} \label{newSallk}
\Sallk{\iter+1}  =   \Zallk{\iter}    +   \thetak{\iter}  \UnitI{\block} \Dimension (\hatxallk{\iter+1}-\xallk{\iter}). 
\end{equation}
\reviseagain{\Cref{lem:writehofs} follows by combining~\eqref{newSallk} with~\eqref{meanconstraintdeviation}.}

\end{proof}
\subsection{Auxiliary facts}
Lemmas~\ref{lemma:descentargumentcostrisk} and~\ref{lemma:descentargumentconstraint} will serve as descent arguments for Algorithm~\ref{algorithm:genericprimalonly}.

 \begin{lemma}[Proximal step] \label{lemma:descentargumentcostrisk}
In Algorithm~\ref{algorithm:genericprimalonly},  $\forall\xall\in\Xall$,
\begin{equation} \label{descentargument}
 \risk{\hatxallk{\iter+1}}+\stronglyconvex{\hatxallk{\iter+1}} \leq \risk{\xall}+\stronglyconvex{\xall} - \frac{1}{2} \normx{\xall-\hatxallk{\iter+1}}{\Dimension\Stepsizek{\iter}}^2 
, 
\end{equation}
where 
\begin{equation}\label{stronglyconvex}
\nocolsep\begin{array}{l}
\stronglyconvex{\xall} = \cost{\reviseagain{\xallk{\iter}}} +  \inner{\grad\cost{\reviseagain{\xallk{\iter}}}}{\xall-\reviseagain{\xallk{\iter}}} 
\qquad \qquad \qquad \\ \hfill
+
\inner{\frac\dualstepsize{\thetak{\iter}}\grad \constraint{\Zallk{\iter}}}{\xall-\Zallk{\iter}}
+ \frac{1}{2} \normx{\xall-\xallk{\iter}}{\Dimension\Stepsizek{\iter}}^2  
.
\end{array}
\end{equation}
\end{lemma}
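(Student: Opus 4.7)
The plan is to recognize \eqref{generichatxall} as the minimization of a strongly convex function and then invoke the standard sharpness estimate available at the minimizer of a strongly convex objective.

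First, I would expand the definition of $\stronglyconvex{\altxall}$ in \eqref{stronglyconvex} and observe that the map $\altxall \mapsto \stronglyconvex{\altxall} + \risk{\altxall}$ differs from the objective in \eqref{generichatxall} only by terms that do not depend on $\altxall$. Indeed, the terms $\cost{\xallk{\iter}} - \inner{\grad\cost{\xallk{\iter}}}{\xallk{\iter}} - \inner{\frac{\dualstepsize}{\thetak{\iter}}\grad\constraint{\Zallk{\iter}}}{\Zallk{\iter}}$ are constant, while the linear part $\inner{\grad\cost{\xallk{\iter}} + \frac{\dualstepsize}{\thetak{\iter}}\grad\constraint{\Zallk{\iter}}}{\altxall}$ and the quadratic proximal term $\frac{1}{2}\normx{\altxall-\xallk{\iter}}{\Dimension\Stepsizek{\iter}}^2$ are identical to those defining $\hatxallk{\iter+1}$. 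Consequently,
\[
 \hatxallk{\iter+1} = \arg\min_{\altxall\in\Xall}\{\stronglyconvex{\altxall}+\risk{\altxall}\}.
\]

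Second, I would use the fact that $\stronglyconvex$ is a quadratic function whose Hessian is exactly $\Dimension\Stepsizek{\iter}$, which is positive definite under the stepsize choice in \eqref{earlydiagonalscaling}; hence $\stronglyconvex$ is $1$-strongly convex with respect to the weighted norm $\normx{\cdot}{\Dimension\Stepsizek{\iter}}$. Since $\riskfunction$ is proper, closed and convex, the sum $\stronglyconvex+\risk$ inherits the same strong convexity modulus. The standard quadratic lower bound at the minimizer of a strongly convex function then gives, for every $\xall\in\Xall$,
\[
 \stronglyconvex{\hatxallk{\iter+1}}+\risk{\hatxallk{\iter+1}} + \tfrac{1}{2}\normx{\xall-\hatxallk{\iter+1}}{\Dimension\Stepsizek{\iter}}^2 \leq \stronglyconvex{\xall}+\risk{\xall},
\]
which is precisely \eqref{descentargument}.

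I do not anticipate any genuine obstacle: the only delicate point is justifying that the weighted-norm strong convexity modulus of $\stronglyconvex$ is truly $1$, which is immediate from the identity $\stronglyconvex{\altxall}-\stronglyconvex{\xall}-\inner{\grad\stronglyconvex{\xall}}{\altxall-\xall} = \tfrac{1}{2}\normx{\altxall-\xall}{\Dimension\Stepsizek{\iter}}^2$. The positive-definiteness of $\Dimension\Stepsizek{\iter}$ is guaranteed by the stepsize prescription \eqref{earlydiagonalscaling}, so the optimality condition yielding the sharpness bound is well-defined. Everything else is a direct consequence of first-order optimality for the proximal problem together with the strong-convexity inequality.
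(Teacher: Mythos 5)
Your proposal is correct and follows essentially the same route as the paper: identify $\hatxallk{\iter+1}$ as the minimizer of $\riskfunction+\stronglyconvexfunction$ (the objectives differ only by terms constant in the optimization variable), then apply the quadratic growth property at the minimizer of a function that is strongly convex with modulus $\Dimension\Stepsizek{\iter}$. The paper's proof is a condensed version of exactly this argument, phrased via the first-order optimality condition $0\in\partial\risk{\hatxallk{\iter+1}}+\grad\stronglyconvex{\hatxallk{\iter+1}}$.
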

\begin{proof}
 Equation~\eqref{generichatxall} rewrites as $\hatxallk{\iter+1}= \arg\min_{\altxall}\{\risk{\altxall}+\stronglyconvex{\altxall}\}$.
Hence, $0\in\partial\risk{\hatxallk{\iter+1}}+\grad\stronglyconvex{\hatxallk{\iter+1}}$, and the result follows by strong convexity of $\riskfunction+\stronglyconvexfunction$  with modulus $\Dimension\Stepsizek{\iter}$.
\end{proof}
%
To proceed, observe that Line~\ref{algorithm:genericZallk} in Algorithm~\ref{algorithm:genericprimalonly} rewrites as
\begin{align}
\label{homothecy1}
&\Zallk{\iter} - \Sallk{\iter} = \thetak{\iter} ( \xallk{\iter} - \Sallk{\iter} )    
,
\\
\label{homothecy2}
& (\Zallk{\iter} - \xallk{\iter}) = \tfrac{1-\thetak{\iter}}{\thetak{\iter}} (  \Sallk{\iter}- \Zallk{\iter} )    
.
\end{align}

\begin{lemma}\label{lemma:descentargumentconstraint}
In Algorithm~\ref{algorithm:genericprimalonly}, for any $\xallsol \in \Xallsol$:
\begin{equation} \label{twostarexpectation}
\nocolsep
\begin{array}{l}
 \condexpectation{\filtrationk{\iter}}{ \constraint{\Sallk{\iter+1}} - \constraint{\xallsol}} 
 =
 \thetak{\iter} \inner{\grad\constraint{\Zallk{\iter}}}{\hatxallk{\iter+1}-\xallsol}
 \ \, \\\hfill
+(1-\thetak{\iter})^2 (\constraint{\Sallk{\iter}}  - \constraint{\xallsol}) 
- \thetak{\iter}^2 (\constraint{\xallk{\iter}}-\constraint{\xallsol})  
 \\\hfill
+ \frac{\thetak{\iter}^2}{2}\normx{\hatxallk{\iter+1}-\xallk{\iter}}{\constraintcovariance}^2 .
\end{array}
\end{equation}
\end{lemma}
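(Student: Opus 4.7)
The plan is to derive \eqref{twostarexpectation} directly from the expectation identity \eqref{newmeanconstraintdeviation} by purely algebraic manipulation, exploiting the quadratic form of $\constraintfunction$ together with first-order optimality at $\xallsol$.

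First I would subtract $\constraint{\xallsol}$ from both sides of \eqref{newmeanconstraintdeviation} and split the inner product via $\hatxallk{\iter+1} - \xallk{\iter} = (\hatxallk{\iter+1}-\xallsol) - (\xallk{\iter}-\xallsol)$. This immediately produces the summand $\thetak{\iter}\inner{\grad\constraint{\Zallk{\iter}}}{\hatxallk{\iter+1}-\xallsol}$ and the quadratic term $\tfrac{\thetak{\iter}^2}{2}\normx{\hatxallk{\iter+1}-\xallk{\iter}}{\constraintcovariance}^2$ appearing in \eqref{twostarexpectation}, reducing the claim to the deterministic identity
\[
\constraint{\Zallk{\iter}} - \constraint{\xallsol} - \thetak{\iter}\inner{\grad\constraint{\Zallk{\iter}}}{\xallk{\iter}-\xallsol} = (1-\thetak{\iter})^{2}\bigl(\constraint{\Sallk{\iter}}-\constraint{\xallsol}\bigr) - \thetak{\iter}^{2}\bigl(\constraint{\xallk{\iter}}-\constraint{\xallsol}\bigr). \quad (\star)
\]

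The verification of $(\star)$ rests on the fact that $\xallsol$ minimises $\constraintfunction$ unconstrainedly, so $\grad\constraint{\xallsol} = \Aall^{\transpose}(\Aall\xallsol-\bvector) = 0$. Two consequences follow: $\constraint{\xall}-\constraint{\xallsol} = \tfrac12\|\Aall(\xall-\xallsol)\|^{2}$ for every $\xall$, and $\grad\constraint{\Zallk{\iter}} = \Aall^{\transpose}\Aall(\Zallk{\iter}-\xallsol)$. Invoking \eqref{homothecy1} in the form $\Zallk{\iter}-\xallsol = (1-\thetak{\iter})(\Sallk{\iter}-\xallsol) + \thetak{\iter}(\xallk{\iter}-\xallsol)$ and polarising $\tfrac12\|\Aall(\Zallk{\iter}-\xallsol)\|^{2}$ yields
\[
\constraint{\Zallk{\iter}}-\constraint{\xallsol} = (1-\thetak{\iter})^{2}\bigl(\constraint{\Sallk{\iter}}-\constraint{\xallsol}\bigr) + \thetak{\iter}^{2}\bigl(\constraint{\xallk{\iter}}-\constraint{\xallsol}\bigr) + \thetak{\iter}(1-\thetak{\iter})\inner{\Aall(\Sallk{\iter}-\xallsol)}{\Aall(\xallk{\iter}-\xallsol)}.
\]

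Finally I would use $\grad\constraint{\Zallk{\iter}} = \Aall^{\transpose}\Aall(\Zallk{\iter}-\xallsol)$ to expand the gradient inner product, again substituting the same convex combination, obtaining
\[
\thetak{\iter}\inner{\grad\constraint{\Zallk{\iter}}}{\xallk{\iter}-\xallsol} = \thetak{\iter}(1-\thetak{\iter})\inner{\Aall(\Sallk{\iter}-\xallsol)}{\Aall(\xallk{\iter}-\xallsol)} + 2\thetak{\iter}^{2}\bigl(\constraint{\xallk{\iter}}-\constraint{\xallsol}\bigr).
\]
Subtracting this from the previous display cancels the cross-term and flips the sign of the $\thetak{\iter}^{2}$ coefficient, yielding $(\star)$ exactly. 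The only obstacle is careful bookkeeping of signs and the factors of two arising from polarisation; because $\constraintfunction$ is quadratic and $\grad\constraint{\xallsol}=0$, every Taylor expansion I invoke is exact and no inequality is ever needed.
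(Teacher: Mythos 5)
Your proof is correct and follows essentially the same route as the paper's: both arguments combine \eqref{newmeanconstraintdeviation} with the quadratic structure of $\constraintfunction$, the fact that $\grad\constraint{\xallsol}=0$ (so $\constraint{\xall}-\constraint{\xallsol}=\tfrac12\norm{\Aall(\xall-\xallsol)}^2$), and the convex-combination identity $\Zallk{\iter}-\xallsol=(1-\thetak{\iter})(\Sallk{\iter}-\xallsol)+\thetak{\iter}(\xallk{\iter}-\xallsol)$ from Line~\ref{algorithm:genericZallk}. The only difference is organizational: you center every term at $\xallsol$ and polarise directly, whereas the paper reaches the same intermediate identity \eqref{someresultend} by first expanding $\inner{\grad\constraint{\Zallk{\iter}}}{\xallsol-\xallk{\iter}}$ via \eqref{homothecy2} and the exact three-point Taylor expansion of the quadratic.
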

\begin{proof}

Proceeding as in \cite[(25)-(28)]{luke18}, we write using \eqref{homothecy2}
\begin{equation*}
\inner{\grad\constraint{\Zallk{\iter}}}{\xallsol\!-\!\xallk{\iter}}
 \refereq{\eqref{homothecy2}}{=}  
 \inner{\grad\constraint{\Zallk{\iter}}}{(\xallsol\!-\!\Zallk{\iter})+\tfrac{1\!-\!\thetak{\iter}}{\thetak{\iter}} (    \Sallk{\iter}\!-\!\Zallk{\iter})} 
 \end{equation*}
 we get, using $h(x)=h(y) +  \inner{\grad\constraint{x}}{y} + \tfrac{1}{2}\norm{A(x-y)}^2$ :
 \begin{equation*}
 \begin{split}
   & \constraint{\xallsol}-\constraint{\Zallk{\iter}} -\tfrac{1}{2}\norm{\Aall(\Zallk{\iter}-\xallsol)}^2  \\
 & + \tfrac{1-\thetak{\iter}}{\thetak{\iter}} [\constraint{\Sallk{\iter}}-\constraint{\Zallk{\iter}} -\tfrac{1}{2}\norm{\Aall(\Sallk{\iter}-\Zallk{\iter})}^2]
\end{split}
\end{equation*}
\begin{equation} \label{someresult}
\begin{split}
\refereq{\eqref{homothecy1}}{=}
& \constraint{\xallsol}+\tfrac{1-\thetak{\iter}}{\thetak{\iter}} \constraint{\Sallk{\iter}}-\tfrac{1}{\thetak{\iter}} \constraint{\Zallk{\iter}} 
\\ & -\tfrac{1}{2}\norm{\Aall(\Zallk{\iter}-\xallsol)}^2 -\tfrac{\thetak{\iter}(1-\thetak{\iter})}{2}\norm{\Aall ( \xallk{\iter} - \Sallk{\iter} )}^2
.
\end{split}
\end{equation}
By combining Line~\ref{algorithm:genericZallk} in Algorithm~\ref{algorithm:genericprimalonly} 

$\Aall(\Zallk{\iter}-\xallsol)
= (1-\thetak{\iter}) \Aall(\Sallk{\iter}-\xallsol) + \thetak{\iter} \Aall (\xallk{\iter}-\xallsol)$ and    $2\inner{c-b}{d-b}= \norm{c-b}^2 + \norm{d-b}^2-\norm{d-c}^2, \forall b,c,d$, we get:
\begin{equation}\label{someresultbis}
\nocolsep\begin{array}{l}
\frac{1}{2}\norm{ \Aall(\Zallk{\iter}-\xallsol))}^2
= \frac{1-\thetak{\iter}}{2}\norm{ \Aall(\Sallk{\iter}-\xallsol) }^2 
\\\hfill
+ \frac{\thetak{\iter}}{2}\norm{ \Aall (\xallk{\iter}-\xallsol) }^2
- \frac{\thetak{\iter}(1-\thetak{\iter})}{2} \norm{\Aall(\Sallk{\iter}-\xallk{\iter})}^2
\\\hfill
\quad = 
 (1-\thetak{\iter})(\constraint{\Sallk{\iter}}-\constraint{\xallsol}) + \thetak{\iter}(\constraint{\xallk{\iter}}-\constraint{\xallsol})
 \\\hfill- \frac{\thetak{\iter}(1-\thetak{\iter})}{2} \norm{\Aall(\Sallk{\iter}-\xallk{\iter})}^2
 ,
\end{array} 
\end{equation}
where we have used $\grad\constraint{\xallsol}=0$. By subtracting \eqref{someresultbis} to ~\eqref{someresult}, we find
\begin{equation}\label{someresultend}
\begin{array}{l}
\inner{\grad\constraint{\Zallk{\iter}}}{\xallsol-\xallk{\iter}}
=
\frac{(1-\thetak{\iter})^2}{\thetak{\iter}} (\constraint{\Sallk{\iter}}-\constraint{\xallsol})
\quad\\\hfill 
-\frac{1}{\thetak{\iter}} (\constraint{\Zallk{\iter}} -\constraint{\xallsol})  - \thetak{\iter}(\constraint{\xallk{\iter}}-\constraint{\xallsol})
.
\end{array} 
\end{equation}
The result follows by combining~\Cref{lem:writehofs} with~\eqref{someresultend}.
\end{proof}
 
\subsection{Extrapolation}
The next results characterize the sequence~$(\Sallk{\iter})$ as a linear combination of the past primal iterates. This characterization is a generalization of \cite[Lemma~2]{fercoq15}, and its proof is similar to that work. 

\begin{lemma}\label{lemma:extrapolation}
 In Algorithm~\ref{algorithm:genericprimalonly}, we have
\begin{equation}
\label{Sallk}
 \Sallk{\iter} = \sum_{\altiter=0}^{\iter} \gammacoefkl{\iter}{\altiter} \xallk{\altiter}
 , \qquad \iter\geq 1,
\end{equation}
where~$(\gammacoefkl{\iter}{\altiter})$ is a collection of diagonal matrices defined by $\gammacoefkl{1}{0}=
\identity-\thetak{0}\Dimension
$, $\gammacoefkl{1}{1}=\thetak{0}\Dimension$, and, for $\iter\geq 1$,
\begin{equation} \label{gammacoefkl}
 \nocolsep
 \gammacoefkl{\iter+1}{\altiter} \,{=} \left\{\!
 \begin{array}{ll}
 (1-\thetak{\iter})\gammacoefkl{\iter}{\altiter}
 &\text{ for } \altiter\,{=}\, 0,\dots,\iter{-}1,
 \\
 (1-\thetak{\iter})\thetak{\iter-1}\Dimension -\thetak{\iter}(\Dimension-\identity)
 &\text{ if } \altiter\,{=}\,\iter,
 \\
 \thetak{\iter}\Dimension
 &\text{ if } \altiter\,{=}\,\iter{+}1.
 \end{array}\right.
\end{equation} 
Besides,
$
  \gammacoefkl{\iter+1}{\iter} =  (1-\thetak{\iter})\gammacoefkl{\iter}{\iter} -\thetak{\iter}(\Dimension-\identity) .
$
\end{lemma}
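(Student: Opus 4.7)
The natural approach is a straightforward induction on $\iter$, exploiting the two-line recursion for $\Sallk{\iter+1}$ given by Lines~\ref{algorithm:genericZallk} and~\ref{algorithm:genericSallk} of Algorithm~\ref{algorithm:genericprimalonly}. Substituting the $\Zallk{\iter}$ update into the $\Sallk{\iter+1}$ update eliminates the auxiliary sequence and yields
\begin{equation*}
  \Sallk{\iter+1} = (1-\thetak{\iter})\,\Sallk{\iter} + \thetak{\iter}\,\xallk{\iter} + \thetak{\iter}\,\Dimension\,(\xallk{\iter+1}-\xallk{\iter}),
\end{equation*}
which is the single identity on which the entire argument rests.

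For the base case $\iter=1$, I would use the initialization $\Sallk{0}=\xallk{0}$, giving $\Zallk{0}=\xallk{0}$ and hence $\Sallk{1} = (\identity-\thetak{0}\Dimension)\,\xallk{0} + \thetak{0}\Dimension\,\xallk{1}$, which matches the stated values of $\gammacoefkl{1}{0}$ and $\gammacoefkl{1}{1}$. For the inductive step, I substitute the expansion $\Sallk{\iter}=\sum_{\altiter=0}^{\iter}\gammacoefkl{\iter}{\altiter}\xallk{\altiter}$ into the displayed identity above and collect coefficients by index $\altiter$. Indices $\altiter=0,\dots,\iter-1$ pick up only the factor $(1-\thetak{\iter})$ from $\Sallk{\iter}$, yielding the first line of~\eqref{gammacoefkl}. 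Index $\altiter=\iter+1$ receives only the fresh term $\thetak{\iter}\Dimension\,\xallk{\iter+1}$, which gives the third line. Index $\altiter=\iter$ collects three contributions, namely $(1-\thetak{\iter})\gammacoefkl{\iter}{\iter}$ from the expansion of $\Sallk{\iter}$, and $\thetak{\iter}\identity - \thetak{\iter}\Dimension$ from the explicit terms, giving the compact form
\begin{equation*}
\gammacoefkl{\iter+1}{\iter} = (1-\thetak{\iter})\gammacoefkl{\iter}{\iter} - \thetak{\iter}(\Dimension-\identity),
\end{equation*}
which is exactly the ``Besides'' identity stated at the end of the lemma.

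To match the middle line of~\eqref{gammacoefkl} it remains to observe that the recursion for the ``last coefficient'' $\gammacoefkl{\iter+1}{\iter+1}=\thetak{\iter}\Dimension$ propagates inductively. Indeed, the base case gives $\gammacoefkl{1}{1}=\thetak{0}\Dimension$, and the third line of~\eqref{gammacoefkl}, shifted by one, shows $\gammacoefkl{\iter}{\iter}=\thetak{\iter-1}\Dimension$ for all $\iter\geq 1$. Substituting this into the compact formula above reproduces the middle line of~\eqref{gammacoefkl}, closing the induction.

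The proof is essentially bookkeeping, so no real obstacle is expected; the only care required is in the $\altiter=\iter$ coefficient, where one must not forget the standalone $\thetak{\iter}\xallk{\iter}$ term coming from the $\Zallk{\iter}$ recursion in addition to the $-\thetak{\iter}\Dimension\xallk{\iter}$ produced by the $\Sallk{\iter+1}$ update. Keeping the identity $\gammacoefkl{\iter}{\iter}=\thetak{\iter-1}\Dimension$ visible throughout the induction is what unifies the two alternative forms of $\gammacoefkl{\iter+1}{\iter}$ stated in the lemma.
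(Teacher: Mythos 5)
Your proof is correct and follows essentially the same route as the paper's: both combine Lines~\ref{algorithm:genericZallk} and~\ref{algorithm:genericSallk} into the single recursion $\Sallk{\iter+1}=(1-\thetak{\iter})\Sallk{\iter}+\thetak{\iter}\Dimension\xallk{\iter+1}-\thetak{\iter}(\Dimension-\identity)\xallk{\iter}$ and then induct, with the base case coming from $\Sallk{0}=\xallk{0}$. Your explicit remark that $\gammacoefkl{\iter}{\iter}=\thetak{\iter-1}\Dimension$ propagates and reconciles the two stated forms of $\gammacoefkl{\iter+1}{\iter}$ is a helpful elaboration of what the paper leaves to ``inspection.''
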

\begin{proof}
  We proceed by induction.  By combining Lines~\ref{algorithm:genericZallk} and~\ref{algorithm:genericSallk} in Algorithm~\ref{algorithm:genericprimalonly}, we find
\begin{equation}\label{developSallk}
 \Sallk{\iter+1}  
 =  
(1-\thetak{\iter}) \Sallk{\iter} + \thetak{\iter} \xallk{\iter}    +  \thetak{\iter} \Dimension (\xallk{\iter+1} -\xallk{\iter})
\end{equation}
 which yields 
 $  \Sallk{1}  =
(\identity-\thetak{0}\Dimension) \xallk{0} +\thetak{0}\Dimension \xallk{1}$, and the values of $\gammacoefkl{1}{0}$ and~$\gammacoefkl{1}{1}$. Suppose now that~\eqref{Sallk} holds for $\iter\geq 1$, then it follows from~\eqref{developSallk} that
\begin{equation}\label{inspection}
 \nocolsep
 \begin{array}{rcl}
 \Sallk{\iter+1}  
&=&
(1-\thetak{\iter}) \sum_{\altiter=0}^{\iter} \gammacoefkl{\iter}{\altiter} \xallk{\altiter} +\thetak{\iter} \Dimension \xallk{\iter+1} -\thetak{\iter}(\Dimension-\identity)\xallk{\iter}
\\
&=&
 \sum_{\altiter=0}^{\iter-1} (1-\thetak{\iter}) \gammacoefkl{\iter}{\altiter} \xallk{\altiter} +  [ (1-\thetak{\iter}) \gammacoefkl{\iter}{\iter} -\thetak{\iter}(\Dimension-\identity)]\xallk{\iter}\\&&\hfill + \thetak{\iter} \Dimension \xallk{\iter+1}.
\end{array}
\end{equation}  
The lemma follows by inspection of~\eqref{Sallk} and~\eqref{inspection}.
\end{proof}
\smallskip
Now, define $\vextcostfunction=(\extcostIfunction{1},\dots,\extcostIfunction{\dimension})$ and $\hatextcostk{\iter}=\bm{1}^\transpose\hatvextcostk{\iter}$, where
\reviseagain{
\begin{equation} 
 \label{hatvextcostk}
 \hatvextcostk{\iter} = \sum_{\altiter=0}^{\iter} \gammacoefkl{\iter}{\altiter} \vextcost{\xallk{\altiter}}
 , \qquad \iter\geq 1.
\end{equation}
}
By convexity, it follows from~\eqref{Sallk} and~\eqref{hatvextcostk} that  $\hatvextcostk{\iter}\geq\vextcost{\Sallk{\iter}}$ and  $\hatextcostk{\iter}\geq\extcost{\Sallk{\iter}}$.
\begin{lemma} In  Algorithm~\ref{algorithm:genericprimalonly},
\begin{align}
& 
  \condexpectation{\filtrationk{\iter}}{\normx{\xallk{\iter+1}\!\!\!-\!\xallsol}{_{\Dimension^2\Stepsizek{\iter}}}^2} 
=
\normx{\hatxallk{\iter+1}\!\!\!-\!\xallsol}{_{\Dimension\Stepsizek{\iter}}}^2  
\!+\! \normx{\xallk{\iter}\!\!-\!\xallsol}{_{\Dimension(\Dimension-\identity)\Stepsizek{\iter}}}^2
\label{expectationdistance}
\\
&\label{expectationhatextcost}
\reviseagain{
\condexpectation{\filtrationk{\iter}}{\hatextcostk{\iter+1}} 
=
(1-\thetak{\iter})\hatextcostk{\iter} +\thetak{\iter} \extcost{\hatxallk{\iter+1}}
}
,
\end{align}
where~$\hatxallk{\iter+1}$ and~$\hatextcostk{\iter}$ are defined as in~\eqref{generichatxall} and~\eqref{hatvextcostk}. 
\end{lemma}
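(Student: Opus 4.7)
The plan is to reduce both identities to two common ingredients: first, the explicit block-update identity $\xallk{\iter+1} = \xallk{\iter} + \UnitI{\block}(\hatxallk{\iter+1} - \xallk{\iter})$ that follows from Lines~\ref{algorithm:genericxIk}--\ref{algorithm:genericxmIk} of Algorithm~\ref{algorithm:genericprimalonly} (and was already used to derive~\eqref{newSallk}); second, the relations $\UnitI{\block}^{2}=\UnitI{\block}$ and $\condexpectation{\filtrationk{\iter}}{\UnitI{\block}\Dimension} = \identity$, which hold because $\UnitI{\block}$, $\Dimension$ and $\Stepsizek{\iter}$ are diagonal and mutually commute. A useful auxiliary point is that $\hatxallk{\iter+1}$ is $\filtrationk{\iter}$-measurable by~\eqref{generichatxall}, hence behaves as a constant under $\condexpectation{\filtrationk{\iter}}{\cdot}$.

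For~\eqref{expectationdistance}, I would set $a=\xallk{\iter}-\xallsol$ and $b=\hatxallk{\iter+1}-\xallk{\iter}$, so that $\xallk{\iter+1}-\xallsol=a+\UnitI{\block}b$ and $\hatxallk{\iter+1}-\xallsol=a+b$. Expanding
\begin{equation*}
\normx{a+\UnitI{\block} b}{\Dimension^{2}\Stepsizek{\iter}}^{2} = \normx{a}{\Dimension^{2}\Stepsizek{\iter}}^{2} + 2\inner{\UnitI{\block}\Dimension^{2}\Stepsizek{\iter}a}{b} + \inner{\UnitI{\block}\Dimension^{2}\Stepsizek{\iter}\UnitI{\block}b}{b},
\end{equation*}
using $\UnitI{\block}\Dimension^{2}\Stepsizek{\iter}\UnitI{\block}=\UnitI{\block}\Dimension^{2}\Stepsizek{\iter}$ and $\condexpectation{\filtrationk{\iter}}{\UnitI{\block}\Dimension^{2}\Stepsizek{\iter}}=\Dimension\Stepsizek{\iter}$, taking conditional expectation yields $\normx{a}{\Dimension^{2}\Stepsizek{\iter}}^{2}+2\inner{\Dimension\Stepsizek{\iter}a}{b}+\normx{b}{\Dimension\Stepsizek{\iter}}^{2}$. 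The claim then follows by adding and subtracting $\normx{a}{\Dimension\Stepsizek{\iter}}^{2}$ to regroup the terms as $\normx{a+b}{\Dimension\Stepsizek{\iter}}^{2}+\normx{a}{\Dimension(\Dimension-\identity)\Stepsizek{\iter}}^{2}$.

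For~\eqref{expectationhatextcost}, I would combine the recursion~\eqref{gammacoefkl} with the identity $\gammacoefkl{\iter+1}{\iter}=(1-\thetak{\iter})\gammacoefkl{\iter}{\iter}-\thetak{\iter}(\Dimension-\identity)$ from Lemma~\ref{lemma:extrapolation} to telescope~\eqref{hatvextcostk} into the one-step recursion
\begin{equation*}
\hatvextcostk{\iter+1} = (1-\thetak{\iter})\hatvextcostk{\iter} - \thetak{\iter}(\Dimension-\identity)\vextcost{\xallk{\iter}} + \thetak{\iter}\Dimension\,\vextcost{\xallk{\iter+1}}.
\end{equation*}
Because $\vextcostfunction$ is separable over blocks and $\xIk{i}{\iter+1}=\hatxIk{i}{\iter+1}$ for $i\in\block$ while $\xIk{i}{\iter+1}=\xIk{i}{\iter}$ otherwise, one has $\vextcost{\xallk{\iter+1}} = \UnitI{\block}\vextcost{\hatxallk{\iter+1}} + (\identity-\UnitI{\block})\vextcost{\xallk{\iter}}$. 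Taking $\condexpectation{\filtrationk{\iter}}{\cdot}$ and using $\condexpectation{\filtrationk{\iter}}{\Dimension\UnitI{\block}}=\identity$ together with $\filtrationk{\iter}$-measurability of $\hatxallk{\iter+1}$ gives $\condexpectation{\filtrationk{\iter}}{\thetak{\iter}\Dimension\UnitI{\block}\vextcost{\hatxallk{\iter+1}}}=\thetak{\iter}\vextcost{\hatxallk{\iter+1}}$, while $\thetak{\iter}\Dimension(\identity-\UnitI{\block})\vextcost{\xallk{\iter}}$ averages to $\thetak{\iter}(\Dimension-\identity)\vextcost{\xallk{\iter}}$, exactly cancelling the corresponding term above. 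Left-multiplying by $\bm{1}^{\top}$ yields~\eqref{expectationhatextcost}.

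The delicate step is the second identity: the crucial observation is that the coefficient $\gammacoefkl{\iter+1}{\iter}$ was engineered in Lemma~\ref{lemma:extrapolation} precisely so that the randomness-induced $\thetak{\iter}(\Dimension-\identity)\vextcost{\xallk{\iter}}$ cancels after conditional expectation, closing the telescope cleanly at $(1-\thetak{\iter})\hatvextcostk{\iter}+\thetak{\iter}\extcost{\hatxallk{\iter+1}}$. Once this structural cancellation is identified, both identities follow from bookkeeping.
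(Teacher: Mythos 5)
Your proof is correct and follows essentially the same route as the paper's: identity \eqref{expectationdistance} is the paper's coordinate-wise computation written globally via $\UnitI{\block}$, and identity \eqref{expectationhatextcost} rests on the same one-step recursion for $\hatvextcostk{\iter}$ extracted from \eqref{gammacoefkl} combined with $\expectation{\UnitI{\block}\Dimension}=\identity$ and block separability. (One cosmetic caveat: $\Stepsizek{\iter}$ is only block-diagonal rather than diagonal, but it still commutes with $\UnitI{\block}$ and $\Dimension$ since those act as scalars on each block, so your manipulations stand.)
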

\begin{proof}
For $\coordinate=1,\dots,\dimension$, some simple algebra gives
\begin{equation*} \! \! \!\!
\condexpectation{\filtrationk{\iter}}{\normx{\xIk{\coordinate}{\iter+1}\!\!-\!\xIsol{\coordinate}}{_{\DimensionI{\block}^2\Stepsizeki{\iter}{\coordinate}}}^2\!} =\normx{\hatxIk{\coordinate}{\iter+1}\!\!-\!\xIsol{\coordinate}}{_{\DimensionI{\coordinate}\Stepsizeki{\iter}{\coordinate}}}^2\!\!+ (\probai{\coordinate}^{-1}\!-1)
  \normx{\xIk{\coordinate}{\iter}\!\!-\!\xIsol{\coordinate}}{_{\DimensionI{\coordinate}\Stepsizeki{\iter}{\coordinate}}}^2
\end{equation*}
Summing up the above for $\coordinate=1,\dots,\dimension$ gives~\eqref{expectationdistance}.
Next, observe that
$
\condexpectation{\filtrationk{\iter}}{\extcostI{\coordinate}{\xallk{\iter+1}}}
 =
 \probai{\coordinate} \extcostI{\coordinate}{\hatxallk{\iter+1}} +(1-\probai{\coordinate}) \extcostI{\coordinate}{\xallk{\iter}} 
 $ for $\coordinate\in\{1,\dots,\dimension\}$,
 which in matrix form rewrites as
\begin{equation} \label{expectationvextcost}
 \condexpectation{\filtrationk{\iter}}{\vextcost{\xallk{\iter+1}}}
 =
 \Dimension^{-1} \vextcost{\hatxallk{\iter+1}} +(\identity- \Dimension^{-1}) \vextcost{\xallk{\iter}} .
\end{equation}
It follows that
\begin{equation*}
 \nocolsep
 \begin{array}{ll}
  \condexpectation{\filtrationk{\iter}}{\hatvextcostk{\iter+1}}
  \hspace{-15mm}&\hspace{15mm} 
  \refereq{\eqref{hatvextcostk}}{=}
  \sum_{\altiter=0}^{\iter-1} \gammacoefkl{\iter+1}{\altiter} \vextcost{\xallk{\altiter}}+ \gammacoefkl{\iter+1}{\iter} \vextcost{\xallk{\iter}}
  \\&\hfill
  +\gammacoefkl{\iter+1}{\iter+1} \condexpectation{\filtrationk{\iter}}{\vextcost{\xallk{\iter+1}}}
  \\ &\refereq{\eqref{gammacoefkl}}{=}
  (1-\thetak{\iter})\sum_{\altiter=0}^{\iter-1} \gammacoefkl{\iter}{\altiter} \vextcost{\xallk{\altiter}}+ \gammacoefkl{\iter+1}{\iter} \vextcost{\xallk{\iter}}
  \\&\hfill
  + \thetak{\iter}\Dimension \condexpectation{\filtrationk{\iter}}{\vextcost{\xallk{\iter+1}}}
  \\ &\refereq{\eqref{expectationvextcost}}{=}
  (1-\thetak{\iter}) \sum_{\altiter=0}^{\iter-1} \gammacoefkl{\iter}{\altiter} \vextcost{\xallk{\altiter}}
    \\&\hfill
+ [\gammacoefkl{\iter+1}{\iter} + \thetak{\iter}   (\Dimension-\identity)] \vextcost{\xallk{\iter}} +\thetak{\iter} \vextcost{\hatxallk{\iter+1}}
  \\ 
&\refereq{\eqref{hatvextcostk}}{=}
 (1-\thetak{\iter})\hatvextcostk{\iter} +\thetak{\iter} \vextcost{\hatxallk{\iter+1}}
 ,
 \end{array}
\end{equation*}
which yields~\eqref{expectationhatextcost} since $\extcostfunction=\bm{1}^\transpose\vextcostfunction$ and $\hatextcostk{\iter}=\bm{1}^\transpose\hatvextcostk{\iter}$.
\end{proof}
 

\subsection{Main descent argument.}
Since $\hatxallk{\iter+1}\in\domain{\riskfunction}$, using \eqref{smoothness} and recognizing $\stronglyconvex{\hatxallk{\iter+1}} $:
\begin{equation*}
\nocolsep
\begin{array}{ll}
  \cost{\hatxallk{\iter+1}} 
 &\refereq{\eqref{smoothness}}{\leq}
\cost{\xallk{\iter}} \!+\!   \inner{\grad\cost{\xallk{\iter}}}{\hatxallk{\iter+1}\!\!-\!\xallk{\iter}}\!+\! \frac{1}{2}\normx{\hatxallk{\iter+1}\!\!-\!\xallk{\iter}}{\Smoothness}^2
\\&
\!\!\!\hspace{-20pt}\refereq{\eqref{stronglyconvex}}{=}
\stronglyconvex{\hatxallk{\iter+1}} \! -\!  \frac\dualstepsize{\thetak{\iter}} \inner{\grad \constraint{\Zallk{\iter}}}{\hatxallk{\iter+1}\!\!-\!\Zallk{\iter}} \!-\! \frac{1}{2} \normx{\hatxallk{\iter+1}\!\!-\!\xallk{\iter}}{_{\Dimension\Stepsize-\Smoothness}}^2  
\
\end{array}
\end{equation*}
We then use \eqref{descentargument} to have  $$\stronglyconvex{\hatxallk{\iter+1}} \leq \risk{\xallsol} -\risk{\hatxallk{\iter+1}}  +\stronglyconvex{\xallsol} - \tfrac{1}{2} \normx{\hatxallk{\iter+1}-\xallsol}{\Dimension\Stepsize}^2,$$
then replace $\stronglyconvex{\xallsol}$ by its expression and use convexity of $\costfunction$ to get:
\begin{equation*}
\nocolsep
\begin{split}
  & \cost{\hatxallk{\iter+1}} 
\leq 
-[ \risk{\hatxallk{\iter+1}} - \risk{\xallsol}  + \tfrac{1}{2} \normx{\hatxallk{\iter+1}-\xallsol}{\Dimension\Stepsize}^2 ] + \cost{\xallsol}\\
& \!-\!  \tfrac\dualstepsize{\thetak{\iter}} \inner{\grad \constraint{\Zallk{\iter}}}{\hatxallk{\iter+1}\!\!-\!\xallsol} \! -\! \tfrac{1}{2} \normx{\hatxallk{\iter+1}\!\!-\!\xallk{\iter}}{\Dimension\Stepsize-\Smoothness}^2  
 \!+\! \tfrac{1}{2} \normx{\xallk{\iter}\!-\!\xallsol}{\Dimension\Stepsize}^2
.
\end{split}
\end{equation*}
 By adding~\eqref{twostarexpectation} multiplied by ${\dualstepsize}/{\thetak{\iter}^2}$ to the above, we find
\begin{equation*}
\begin{split}
&\cost{\hatxallk{\iter{+}1}} 
\,{\leq}\,
{-} \big[  \risk{\hatxallk{\iter{+}1}} {-}\, \risk{\xallsol} \,{+}\,\tfrac{\dualstepsize}{\thetak{\iter}^2} \condexpectation{\filtrationk{\iter}}{ \constraint{\Sallk{\iter{+}1}} {-}\, \constraint{\xallsol}} \\
& \, 
+ \tfrac{1}{2} \normx{\hatxallk{\iter+1}\!\!-\!\xallsol}{\Dimension\Stepsize}^2 \big]
+  \tfrac{\dualstepsize(1-\thetak{\iter})^2}{\thetak{\iter}^2}  (\constraint{\Sallk{\iter}} - \constraint{\xallsol}) +\cost{\xallsol} 
  \\
  \,
  &\!-\! \dualstepsize (\constraint{\xallk{\iter}}\!-\!\constraint{\xallsol})  
\!-\! \tfrac{1}{2} \normx{\hatxallk{\iter+1}\!\!-\!\xallk{\iter}}{\Dimension\Stepsize-\Smoothness-\dualstepsize\constraintcovariance}^2  
\!+\! \tfrac{1}{2} \normx{\xallk{\iter}\!-\!\xallsol}{\Dimension\Stepsize}^2 .
\end{split}
\end{equation*}
After adding to the above~\eqref{expectationdistance} multiplied by~$1/2$ and~\eqref{expectationhatextcost} multiplied by~$1/\thetak{\iter}$, and rearranging the terms, we find
\begin{equation*}
\begin{split}
 & \condexpectation{\filtrationk{\iter}}{\tfrac{1}{\thetak{\iter}}[\hatextcostk{^{\iter+1}}\!\!-\!\extcost{\xallsol}] \!+\!\tfrac{\dualstepsize}{\thetak{\iter}^2}[\constraint{\Sallk{\iter+1}} \!-\! \constraint{\xallsol}]}  \!+\! \tfrac{1}{2} \normx{\xallk{\iter+1}\!\!\!-\!\xallsol}{\Dimension^2\Stepsize}^2 
\\& \, \,\leq
\tfrac{1-\thetak{\iter}}{\thetak{\iter}}[\hatextcostk{\iter} - \extcost{\xallsol}] 
 + 
 \tfrac{\dualstepsize(1-\thetak{\iter})^2}{\thetak{\iter}^2}  (\constraint{\Sallk{\iter}} - \constraint{\xallsol})
\\ & \,\,
+ \tfrac{1}{2} \normx{\xallk{\iter}-\xallsol}{\Dimension^2\Stepsize}^2 
  - \dualstepsize (\constraint{\xallk{\iter}}-\constraint{\xallsol})  
- \tfrac{1}{2} \normx{\hatxallk{\iter+1}-\xallk{\iter}}{\Scalek{\iter}}^2  
 ,
\end{split}
\end{equation*}
where we write $\Scalek{\iter}\defeq \Dimension\Stepsizek{\iter}-\Smoothness-\dualstepsize\constraintcovariance$. 
%
%
A Lyapunov function shows in the inequality above on condition that 
\begin{equation}\label{thetacondition}
  ({1-\thetak{\iter+1}})/{\thetak{\iter+1}} \leq  {1}/\thetak{\iter}  \old{\quad\text{and}\quad \Stepsizek{\iter+1}\preceq\Stepsizek{\iter}}.
\end{equation}
Then we can define the Lyapunov function $\textstyle \lyapunovk{\iter} \defeq $
\reviseagain{
\begin{equation*}
\tfrac{1-\thetak{\iter}}{\thetak{\iter}}[\hatextcostk{\iter}  -\extcost{\xallsol} ]
+  \tfrac{\dualstepsize(1-\thetak{\iter})^2 }{\thetak{\iter}^2}(\constraint{\Sallk{\iter}} - \constraint{\xallsol}) 
+  \tfrac{ 1}{2}  \normx{\xallk{\iter}-\xallsol}{\Dimension^2\Stepsizek{\iter}}^2  ,
\end{equation*} 
}%
and the Lyapunov inequality above yields
\begin{equation} \label{argumentsequelthree}
 \condexpectation{\filtrationk{\iter}}{ \lyapunovk{\iter+1}}   
\leq
\lyapunovk{\iter} - \dualstepsize (\constraint{\xallk{\iter}}-\constraint{\xallsol})  - \frac{1}{2}  \normx{\hatxallk{\iter+1}-\xallk{\iter}}{\Scalek{\iter}}^2
.
\end{equation}

The tightest choice for sequence~$(\thetak{\iter})$ allowed by~\eqref{thetacondition} is 
\begin{equation}\label{tightthetasequence} 
\thetak{\iter}=1/(\iter+1) 
, \qquad \forall\iter\geq 0,
\end{equation}
 in which case Algorithm~\ref{algorithm:genericprimalonly} rewrites in primal-dual form as an inexact, block-coordinate variant of the algorithm proposed in~\cite{luke18} with proximal gradient steps for the smooth part of the composite objective, and $ \lyapunovk{\iter}$ reduces to
\begin{align*}
 \lyapunovk{\iter} \,{=}\,  \iter[\hatextcostk{\iter}   {-}\extcost{\xallsol} ]
{+}  \dualstepsize\iter^2(\constraint{\Sallk{\iter}} {-} \constraint{\xallsol}) {+}  \frac{1}{2}  \normx{\xallk{\iter}{-}\xallsol}{\Dimension^2\Stepsizek{\iter}}^2  .
\end{align*}

We are now in a position to show \Cref{theorem:convergence}.
\endedit{}%
\def\proof{\noindent\hspace{2em}{\itshape Proof of \Cref{theorem:convergence}: }}%

\begin{proof}
Recall that Algorithms~\ref{algorithm:genericprimaldual} and~\ref{algorithm:genericprimalonly} are equivalent if $\thetak{\iter}=1/(\iter+1)$
 and $\dualk{\iter} = (\dualstepsize/\thetak{\iter}) (\Aall\Zallk{\iter}-\bvector) $. It is then straightforward to see that~\eqref{earlydiagonalscaling} ensures that $\Scalek{\iter}\succ 0$ in~\eqref{argumentsequelthree}.
 Besides, \eqref{argumentsequelthree} is analogous to~\cite[(39)]{luke18}, with an additional smooth term $\costfunction$.
The rest of the proof relies on arguments due to~\cite{sodolov07,malitsky17}\textemdash
since it follows the exact lines~\cite[pp. 13-15]{luke18}, it will be omitted for brevity. 
%
\end{proof}



\bibliographystyle{IEEEtran}
\bibliography{IEEEabrv,obPDLO-LA}

\end{document}
